\providecommand{\eref}[1]{\eqref{#1}}  
\providecommand{\cref}[1]{Chapter~\ref{#1}}
\providecommand{\fref}[1]{Figure~\ref{#1}}
\renewcommand{\stop}{\text{S2P}}
\newtheorem{property}{Property}
\newtheorem{corollary}{Corollary}
\newtheorem{definition}{Definition}
\newtheorem{theorem}{Theorem}
\newtheorem{lemma}{Lemma}
\providecommand{\R}{\ensuremath{\mathbb{R}}}
\providecommand{\C}{\ensuremath{\mathbb{C}}}
\providecommand{\Z}{\ensuremath{\mathbb{Z}}}
\providecommand{\abs}[1]{\lvert#1\rvert}
\providecommand{\norm}[1]{\lVert#1\rVert}
\providecommand{\bydef}{\overset{\text{def}}{=}}
\renewcommand{\vec}[1]{\ensuremath{\mathbf{#1}}}
\providecommand{\greekvec}[1]{\ensuremath{\boldsymbol{#1}}}
\providecommand{\mat}[1]{\ensuremath{\mathbf{#1}}}
\providecommand{\calN}{\mathcal{N}}
\providecommand{\calT}{\mathcal{T}}
\providecommand{\mF}{\mat{F}}
\providecommand{\mI}{\mat{I}}
\providecommand{\vm}{\vec{m}}
\providecommand{\vn}{\vec{n}}
\providecommand{\vp}{\vec{p}}
\providecommand{\vx}{\vec{x}}
\providecommand{\vy}{\vec{y}}
\providecommand{\vell}{\boldsymbol{\ell}}
\providecommand{\valpha}{\greekvec{\alpha}}
\providecommand{\vbeta}{\greekvec{\beta}}
\providecommand{\vxi}{\greekvec{\xi}}
\providecommand{\vphi}{\greekvec{\phi}}
\providecommand{\vpsi}{\greekvec{\psi}}
\providecommand{\vxtilde}{\mathbf{\widetilde{x}}}
\providecommand{\vzero}{\vec{0}}
\newcommand{\argmin}[1]{\mathop{\underset{#1}{\mbox{argmin}}}}
\useunder{\uline}{\ul}{}
\definecolor{weLow}{rgb}{0.8,0.2,0.2}
\definecolor{weMid}{rgb}{0.95,0.95,0.95}
\definecolor{weHigh}{rgb}{0.0,0.8,0.6}
\newcommand{\wecolorpm}[2]{%
  \ifnum\fpeval{#1<0.5}>0
    \cellcolor{weLow!\fpeval{100-#1*200}!weMid}#1 $\pm$ #2%
  \else
    \cellcolor{weMid!\fpeval{100-(#1-0.5)*200}!weHigh}#1 $\pm$ #2%
  \fi
}
\newcommand{\prcolorpm}[2]{%
  \ifnum\fpeval{#1<0.5}>0
    \cellcolor{weLow!\fpeval{100-#1*200}!weMid}#1 $\pm$ #2%
  \else
    \cellcolor{weMid!\fpeval{100-(#1-0.5)*200}!weHigh}#1 $\pm$ #2%
  \fi
}
\begin{document}

\title{Wavefront Estimation From a Single Measurement: Uniqueness and Algorithms
}

\author{Nicholas Chimitt,~\IEEEmembership{Member,~IEEE,} Ali Almuallem,~\IEEEmembership{Student Member,~IEEE,} \\
Qi Guo,~\IEEEmembership{Member,~IEEE,} Stanley H. Chan,~\IEEEmembership{Senior Member,~IEEE}
\thanks{The authors are with the School of Electrical and Computer Engineering, Purdue University, West Lafayette, IN 47907, USA. Corresponding author: Nicholas Chimitt, email: \texttt{nchimitt@purdue.edu}.}
\thanks{The work is supported in part by the Intelligence Advanced Research Projects Activity (IARPA) under Contract No. 2022-21102100004, and in part by the National Science Foundation under the grants CCSS-2030570 and IIS-2133032. The views and conclusions contained herein are those of the authors and should not be interpreted as necessarily representing the official policies, either expressed or implied, of IARPA, or the U.S. Government. The U.S. Government is authorized to reproduce and distribute reprints for governmental purposes notwithstanding any copyright annotation therein.}
}

\maketitle

\begin{abstract}
Wavefront estimation is an essential component of adaptive optics where the goal is to recover the underlying phase from its Fourier magnitude. While this may sound identical to classical phase retrieval, wavefront estimation faces more strict requirements regarding uniqueness as adaptive optics systems need a unique phase to compensate for the distorted wavefront. Existing real-time wavefront estimation methodologies are dominated by sensing via specialized optical hardware due to their high speed, but they often have a low spatial resolution. A computational method that can perform both fast and accurate wavefront estimation with a single measurement can improve resolution and bring new applications such as real-time \emph{passive} wavefront estimation, opening the door to a new generation of medical and defense applications.

In this paper, we tackle the wavefront estimation problem by observing that the non-uniqueness is related to the geometry of the pupil shape. By analyzing the source of ambiguities and breaking the symmetry, we present a joint optics-algorithm approach by co-designing the shape of the pupil and the reconstruction neural network. Using our proposed lightweight neural network, we demonstrate wavefront estimation of a phase of size $128\times 128$ at $5,200$ frames per second on a CPU computer, achieving an average Strehl ratio up to $0.98$ in the noiseless case. We additionally test our method on real measurements using a spatial light modulator. Code is available at \url{https://pages.github.itap.purdue.edu/StanleyChanGroup/wavefront-estimation/}.
\end{abstract}

\begin{IEEEkeywords}
Wavefront estimation, wavefront sensing, adaptive optics, phase retrieval, neural representation
\end{IEEEkeywords}

\section{Introduction}
\IEEEPARstart{I}{maging} systems are impacted by aberrations when a distorted wavefront is projected to a sensor, creating a less-than-ideal impulse response known as a point spread function (PSF). While image recovery can be done via post-processing, a commonly used hardware solution is adaptive optics (AO) which aims to cancel out the wavefront distortion in real-time and bring the system to its diffraction limit. The AO system has two tasks: wavefront estimation and correction. Ideally, we want to do wavefront estimation fast; if the distortion fluctuates in time, latency in estimation will reduce our ability to mitigate the effects due to phase decorrelation \cite{Tyson_Frazier_Book}.

This paper focuses on the estimation problem, i.e., recovering the wavefront from the PSF. We focus on the hardest limit: using only one \textit{single} measurement. At the core of this problem is the recovery of the phase from a Fourier magnitude. This setting should be familiar to many readers who know phase retrieval. Indeed, wavefront estimation \emph{is} a form of phase retrieval with a subtle but important difference. Both problems can be formulated as the recovery of a complex signal from its Fourier magnitude: if $\vx \in \C^N$ is the ground truth complex signal and $\mF \in \C^{M \times N}$ is an oversampled discrete Fourier transform matrix, then the inverse problem is to find $\vx$ from the measurement $\vy \in \R^M$ \cite{sayre_1952_a, Gerchberg_1972, Fienup_1982_Comparison, Eldar_2016_Hassibi_overview, bendory_2017_a, Unser_2023_Review}:
\begin{equation}
\text{Find} \;\; \vx \quad \text{subject to} \quad \vy = |\mF\vx|^2.
\label{eq: simple phase retrieval}
\end{equation}
However, because of the magnitude-square operation, the problem does not always have a unique solution.

\begin{figure*}[t]
\centering
\includegraphics[width=0.9\linewidth, trim={0 0 0 1cm},clip]{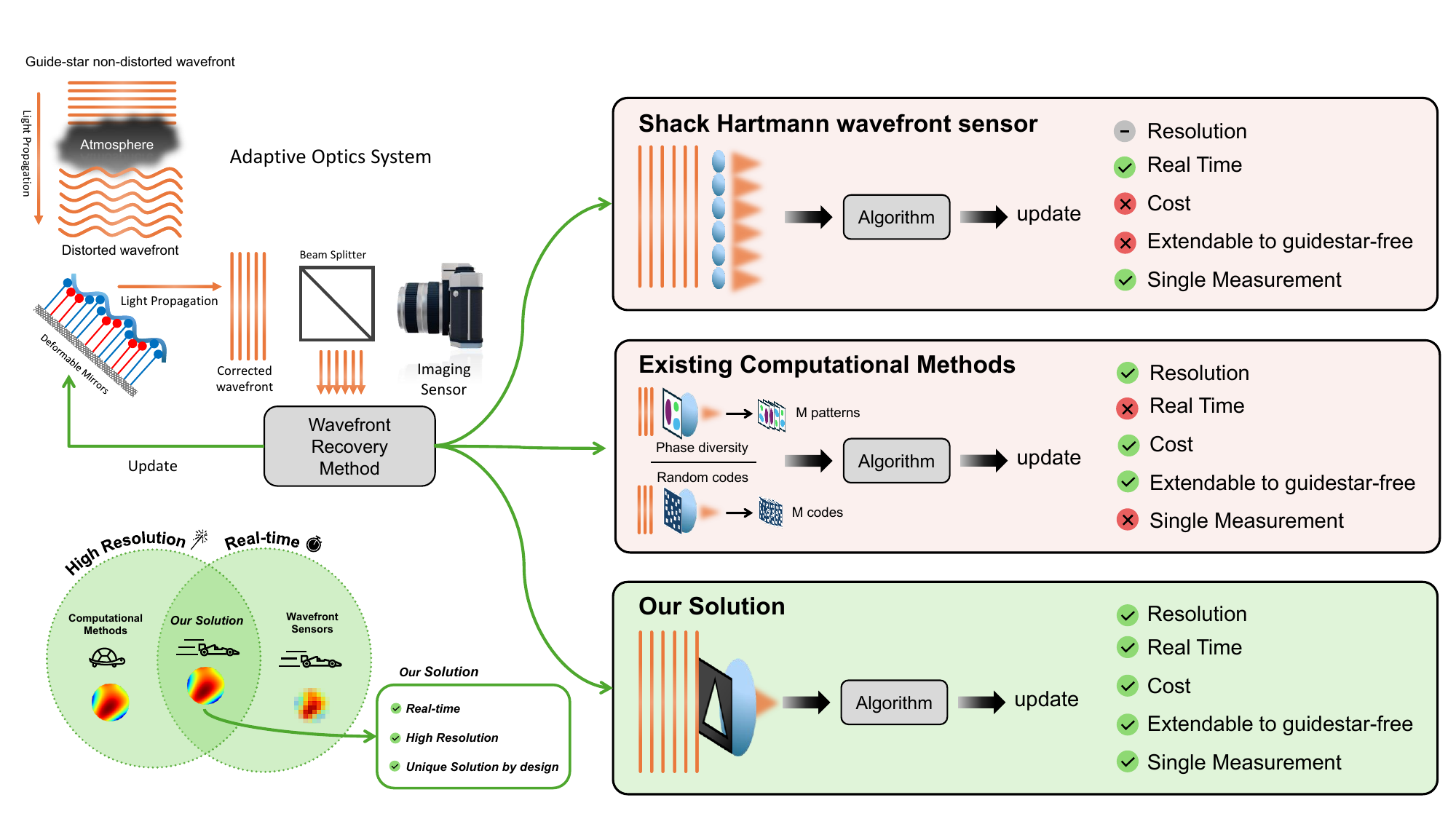} \vspace{-3ex}
\caption{Overview of adaptive optics (AO) and the position of our method among hardware and alternative computational methods for wavefront estimation.}
\label{fig: glass}
\end{figure*}

The subtle but important difference between phase retrieval and wavefront estimation lies in the types of permissible ambiguous solutions of \eqref{eq: simple phase retrieval}, divided into \emph{trivial} and \emph{non-trivial} ambiguities. As the names imply, trivial ambiguities are viewed as acceptable solutions to \eqref{eq: simple phase retrieval} in the context of phase retrieval \cite{Ranieri_2013, Eldar_2014_STFT, bendory_2017_a, Eldar_2016_Hassibi_overview, Unser_2023_Review}. The vast majority of phase retrieval papers is on eliminating non-trivial ambiguities.

The story is different for wavefront estimation. While wavefront estimation still suffers from the non-trivial ambiguity problem, there also exist trivial ambiguities that will severely degrade AO correction. Because of this, wavefront estimation is often performed by specialized optical hardware or, if done computationally, using multiple measurements \cite{gonsalves_1982_a, Candes_2015_Coded} to overcome the relevant trivial ambiguities.

\subsection{Why not just measure it?} 
If dealing with the trivial ambiguity is such an important but difficult problem, why don't we just \emph{measure} the phase using a Shack-Hartmann wavefront sensor \cite{Ragazzoni_1996, Tyson_Frazier_Book}? Putting aside the cost (a typical setup from Thorlabs would cost thousands of dollars), the resolution is in the range of $35 \times 21$ to $73 \times 45$. This means it cannot measure too many high-order aberrations, in addition it has been shown a Shack-Hartmann sensor can suffer when the aberrations are strong due to an effect analogous to crosstalk. Alternatives such as interferometry \cite{Hariharan_1987_PhaseShift, Colavita_1994_Interferometry, Wyant_2003_Interferometry} can measure high-order aberrations, but are more suitable for laboratory settings due to their reliance on precise calibration.

Computational methods bring the possibility of high-performance, guidestar-free adaptive optics. Recently, Feng et al. \cite{Feng_2023_NeuWS} demonstrated that from a natural blurry image, one can recover the PSF and corresponding wavefront with a compute time on the order of minutes. This paper offers the complementary capability of recovering the wavefront from a single PSF at thousands of frames per second. We provide an overview of this method in the context of alternatives in \fref{fig: glass}.

\subsection{Preview of main results}
A key observation we make in this paper is that most pupils are a circle --- this is how we design our imaging systems. However, a circular pupil creates a critical issue in wavefront estimation: the support of its conjugate-flip is identical. So, if we want to resolve trivial ambiguity, we need to \emph{break the symmetry}! As we will show later in the paper, we can identify conditions under which there exist \emph{invertible point spread functions (iPSFs)} such that the phase can be recovered suitably for adaptive optics.

\begin{figure}[t]
\centering
\begin{tabular}{c}
\includegraphics[width=0.9\linewidth]{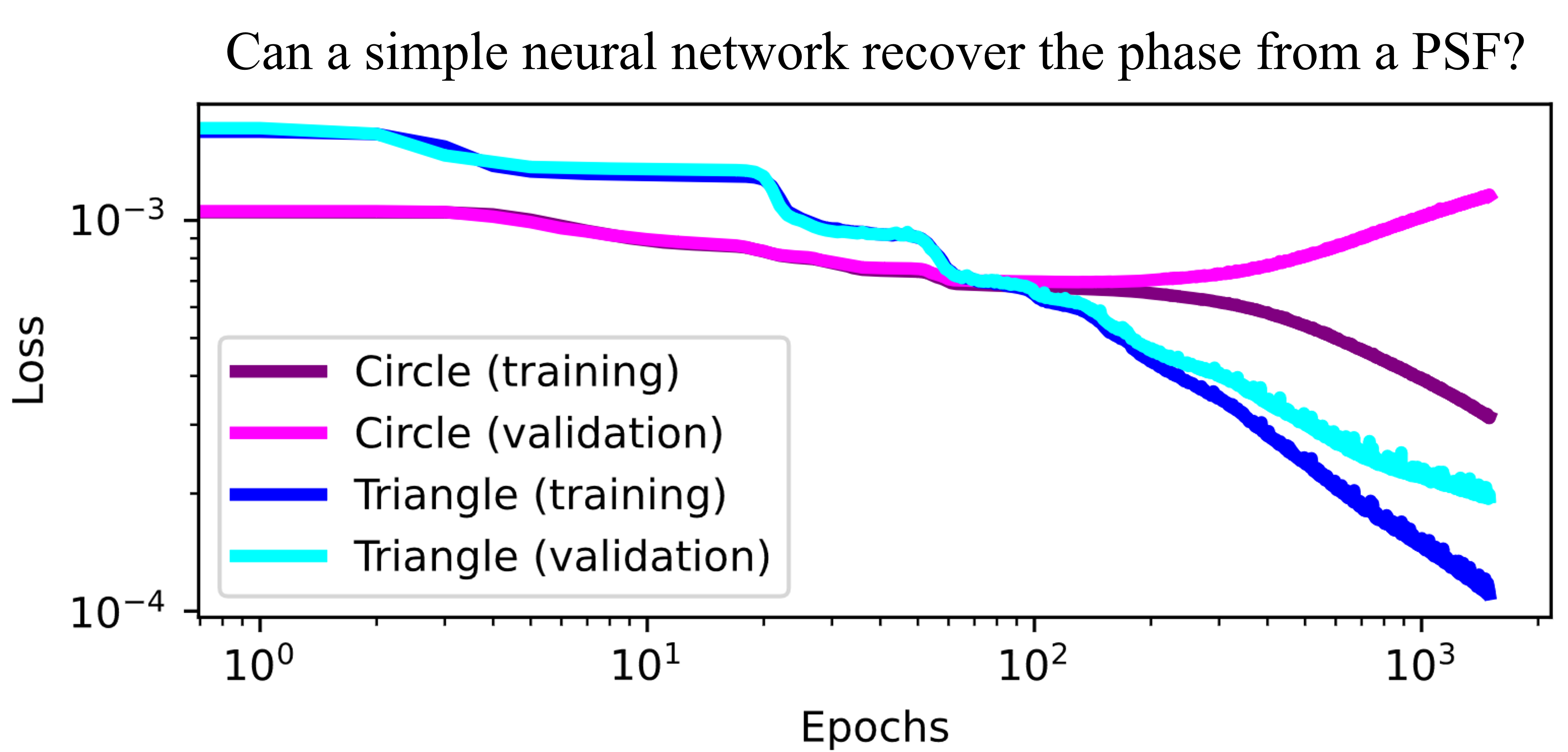}
\end{tabular}
\vspace{-2ex}
\caption{
Training and validation loss for a dataset of PSFs observed through circular and triangular pupils. Using a circular pupil causes overfitting while the triangular pupil allows reasonable generalization to validation data.}
\label{fig: s2p_doesnt_work 1}
\end{figure}

To convince readers of the significance of pupil geometry, in \fref{fig: s2p_doesnt_work 1} we show the training and validation performance for recovering phase uniquely from a Fourier magnitude. The ``algorithm'' we use to perform the phase recovery task is to train a simple multi-layer perceptron (MLP); we are not advocating in favor of a learning-based solution over other well-established convex optimization algorithms, we are merely illustrating the importance of geometry. The training and validation curves in \fref{fig: s2p_doesnt_work 1} show a clear contrast between the two geometries. For the circular aperture, even if we can drive the training loss down, the validation loss will increase, implying that the network cannot generalize. The triangle aperture has a much better behavior as far as training and validation are concerned. As we shall explain more precisely later, this is not a failure of network design -- it is explainable from the types of ambiguities permitted by the two geometries.

We outline the contributions of this paper as follows:
\begin{enumerate}
    \item \textbf{Practical conditions for invertible PSFs.} We prove there exists a set of practically achievable conditions for wavefront estimation from an incoherent PSF in which the solution is unique. Therefore, one can build an imaging system whose phase can be uniquely determined. We refer to these as invertible PSFs. 
    \item \textbf{Unique wavefront estimation algorithms using a single PSF.} We describe and compare a few varieties of networks that can perform real-time PSF inversion, operating near or faster than consumer-grade high-speed Shack-Hartmann sensors and offering high accuracy for downstream adaptive optics correction.
\end{enumerate}

\section{Background and Phase Retrieval}
\subsection{Notation}
In this paper, we focus on 2D problems, though it can be generalized to $d$ dimensions \cite{Hayes_1982_Multidimensional, Ranieri_2013, bendory_2017_a}. A 2D complex signal is denoted as $x[\vn] \in \C$, where $\vn = (n_1,n_2)$ denotes the spatial index. Without loss of generality, we assume $\vn$ is defined on a regular grid so that $n_i \in \{0,\ldots,N_i-1\}$ where $N_i$ is the number of points along each dimension of the grid $\Z_N = \{0, \ldots, N_1-1\} \times \{0, \ldots, N_2-1\}$.

The signal $x[\vn]$ contains magnitude and phase $\phi[\vn] \in \R$,
\begin{equation}
x[\vn] = \big| x[\vn] \big| \; e^{j \phi[\vn]}.
\end{equation}
We use $\vx \in \C^N$ to denote an $N$-dimensional vector $\vx = [x[\vn_1],\ldots,x[\vn_N]]^T$. For a 2D regular grid, it is easy to see that $N = N_1N_2$. To specify the vector of the magnitude of $\vx$, we write $|\vx| = [|x[\vn_1]|,\ldots,|x[\vn_N]|]^T$. Similarly, for the phase, we write $\vphi = [\phi[\vn_1],\ldots,\phi[\vn_N]]^T$.

We define the discrete Fourier transform (DFT) as
\begin{equation}
y[\vm] =
\frac{1}{M} \sum_{\vn \in \Z_N} x[\vn] e^{-j2\pi \vm^T\vn / M },
\label{eq: measurement_model}
\end{equation}
where $\vm = (m_1,m_2)$ is the output coordinate with $m_i = 0,\ldots,M_i-1$, and $M = M_1M_2$. The output vector is defined as $\vy = [y[\vm_1],\ldots,y[\vm_M]]^T$.

For a typical DFT, the input dimension $N_i$ and output dimension $M_i$ satisfy $M_i = N_i$. However, in phase retrieval, we often need to \emph{oversample} in the output space. In this case, we will have $M_i \ge N_i$, corresponding to padding $M_i - N_i$ zeros to each dimension of the input $x[\vn]$ before performing the DFT. Assuming that the oversampling is equal across all dimensions, we define the oversampling ratio as $s \bydef M/N$. For notational convenience, we also define the oversampled Fourier transform matrix $\mF \in \C^{M \times N}$ with $M \ge N$. Therefore, $\vy = \mF\vx$ represents a $M\times N$ discrete Fourier transform on a zero-padded input $x[\vn]$.

\subsection{Phase retrieval and trivial ambiguities}
\begin{definition}[Phase Retrieval]
Let $\vx \in \C^N$ be the ground truth complex signal, and let $\vy = |\mF\vx|^2 \in \R^M$ be the Fourier magnitude-square where $\mF \in \C^{M \times N}$ is the oversampled DFT matrix.
The phase retrieval problem is defined as
\begin{equation}
\text{Find} \quad \widehat{\vx} \qquad \text{such that} \quad \vy = \abs{\mF \widehat{\vx}}^2.
\label{eq: pr_matrix}
\end{equation}
\end{definition}

In this paper, we are interested in the following three types of solutions that can be derived from the true signal $\vx$.
\begin{definition}
Let $x[\vn]$ be the ground truth signal. We define DC phase offset, translation\footnote{The translation here is a circular translation. This ensures consistency when we perform Fourier analysis. Thus, strictly speaking, we should write $\calT_{\text{shift}}\{x[\vn]\} = x[\vn - \vell \;\; \text{mod} \; N]$. But given this cumbersome notation, we write $x[\vn - \vell]$ with the implied meaning of circular wrapping.}, and conjugate flip as follows.
\begin{alignat*}{2}
\calT_{\text{dc}}\{\vx\}[\vn] &= x[\vn]e^{j\phi_0}, &\qquad  & \text{for any } \phi_0 \in [0,2\pi), \\
\calT_{\text{shift}}\{\vx\}[\vn] &= x[\vn - \vell],     &\qquad  & \text{for any } \vell \in \Z_N,  \\
\calT_{\text{flip}}\{\vx\}[\vn] &= x^*[-\vn].
\end{alignat*}
These operations can be defined elementwise for any vector $\vx$, thus giving us $\calT_{\text{dc}}(\vx)$, $\calT_{\text{shift}}(\vx)$, $\calT_{\text{flip}}(\vx)$, respectively.
\end{definition}

\begin{definition}[Trivial Ambiguity]
Given a vector $\vx$, the trivial ambiguity set $\Omega(\vx)$ contains all vectors generated by
\begin{equation}
\Omega(\vx) = \langle \{ \calT_{\text{dc}}(\vx), \; \calT_{\text{shift}}(\vx), \; \calT_{\text{flip}}(\vx) \} \rangle,
\label{eq: ambiguities}
\end{equation}
where $\langle \{ \} \rangle$ denotes closure of the set under compositions.
\end{definition}

\begin{property}[Bendory et al. \cite{bendory_2017_a}] Let $\vx \in \C^N$ be a complex vector and $\mF \in \C^{M \times N}$ be the Fourier transform matrix. Then, any trivial ambiguity will give us the same Fourier magnitude:
\begin{equation}
\widehat{\vx} \in \Omega(\vx) \; \implies \; \abs{\mF \widehat{\vx}}^2 = \abs{\mF \vx}^2.
\end{equation}
\end{property}
In addition to trivial ambiguities, there also exist non-trivial ambiguities such that $\widehat{\vx} \not\in \Omega(\vx)$ but $\abs{\mF \widehat{\vx}}^2 = \abs{\mF \vx}^2$. Non-trivial ambiguities are of significant importance in the literature as they correspond to signals with an entirely different structure than what $\vx$ actually represents.

\subsection{Known results about uniqueness}
In the phase retreival literature, an algorithm declares success if the only ambiguities its solution encounters are the trivial ambiguities \cite{Fienup_1983_Reference, Fienup_1983_Uniqueness}. Most of the known uniqueness results are the non-trivial ones. These non-trivial solutions are plenty for 1D problems, e.g., Hassibi and collaborators \cite{Eldar_2016_Hassibi_overview} mentioned that there are up to $2^N$ of them where $N$ is the number of variables (i.e., pixels) for the phase.

Overcoming the non-trivial ambiguities is usually centered around two themes: oversampling rate and prior information (e.g., sparsity) \cite{Eldar_2016_Hassibi_overview}. The minimum oversampling rate is $M \ge 2N$ if the sampling matrix is the DFT matrix. For a general sampling matrix, the consensus is $M \ge 4N - 4$ \cite{Unser_2023_Review} which is called the ``$(4N-4)$ conjecture'' \cite{Bandeira_2014_SavingPhase} and is proven in \cite{Conca_2015_Injectivity}. For sparsity, various papers have demonstrated different degrees of uniqueness based on the periodicity of the signal and other properties of the signal (e.g., collision-free in \cite{Ranieri_2013}). There is also interest in probabilistic analysis of the uniqueness of a random measurement matrix \cite{Eldar_2016_Hassibi_overview, Vinzant_2015_Injectivity, Candes_2015_Coded}.

The good news about non-trivial ambiguities is that although they play a big role in 1D phase retrieval, they have negligible effects in 2D. The following theorem from Hayes shows that the set of non-trivial ambiguities is measure zero.
\begin{theorem}[Hayes' Theorem \cite{Hayes_1982_Multidimensional, Hayes_1982_Reducible}: Non-trivial Ambiguities have measure zero in 2D]
Let $\vx \in \C^N$ be the ground truth, and define $\vy = |\mF \vx|^2 \in \R^{M}$ be the Fourier magnitude-square. If $d \ge 2$ and if the oversampling ratio is $s = M/N \ge 2$ (i.e., at least twice the Nyquist rate per dimension), then it is almost surely (a.s.) that there does not exist a non-trivial solution $\widehat{\vx}$ with $\vy = |\mF \widehat{\vx}|^2$ and $\widehat{\vx} \not\in \Omega(\vx)$.
\end{theorem}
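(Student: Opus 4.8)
The plan is to translate the Fourier-magnitude data into a statement about factorizations of a multivariate polynomial, and then exploit the fact that, in contrast with the one-dimensional case, a polynomial in $d \ge 2$ variables is generically irreducible.

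\emph{Step 1: Pass to the autocorrelation.} Since we oversample at (at least) twice the Nyquist rate per dimension, so that $M_i \ge 2N_i - 1$, the samples in $\vy = |\mF\vx|^2$ determine by an inverse DFT the full aperiodic autocorrelation $r_\vx[\vn] = \sum_{\vk} x[\vk]\,\overline{x[\vk-\vn]}$ without aliasing, and conversely. Encode $\vx$ by its $z$-transform $X(\vz) = \sum_{\vn \in \Z_N} x[\vn]\,\vz^{-\vn}$, a (Laurent) polynomial in $d$ variables of multidegree at most $(N_1-1,\dots,N_d-1)$, and let $\widetilde X(\vz) = \overline{X(1/\overline\vz)}$ be its conjugate-reversed version, whose coefficients are $\overline{x[-\vn]}$. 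Knowing $r_\vx$ is the same as knowing the product $X(\vz)\widetilde X(\vz)$, so $|\mF\widehat\vx|^2 = |\mF\vx|^2$ holds \emph{iff} $\widehat X(\vz)\,\widetilde{\widehat X}(\vz) = X(\vz)\,\widetilde X(\vz)$ as polynomials; comparing multidegrees also forces $\widehat\vx$ to be supported on a translate of $\Z_N$.

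\emph{Step 2: Use unique factorization.} The ring $\C[z_1,\dots,z_d]$ is a UFD (and its Laurent localization is too, with the monomials as units). Suppose $X$ is, up to a monomial, irreducible; then so is $\widetilde X$. The identity $\widehat X\,\widetilde{\widehat X} = X\,\widetilde X$ then forces $\widehat X$ to be a unit (a nonzero monomial) times $X$, or a unit times $\widetilde X$. Unwinding the monomial gives a scalar multiple of a circular translate of $\vx$ (i.e.\ $\calT_{\text{dc}}\calT_{\text{shift}}(\vx)$) in the first case and of a conjugate-flipped translate (i.e.\ $\calT_{\text{dc}}\calT_{\text{shift}}\calT_{\text{flip}}(\vx)$) in the second; matching the magnitudes exactly pins the leftover scalar to unit modulus $e^{j\phi_0}$. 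Either way $\widehat\vx \in \Omega(\vx)$. Hence a \emph{non-trivial} ambiguity can occur only if $X$ — equivalently $\vx$ — is reducible (or has degenerate support, itself a measure-zero event).

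\emph{Step 3: Reducible signals are measure zero, and here $d \ge 2$ enters.} Identifying $\C^N$ with the coefficient space of polynomials supported on $\Z_N$, the reducibility locus is Zariski closed: bounding the multidegrees of any proper factors by that of $X$, it is the finite union, over admissible multidegree splittings, of the images of the projection $\{(X,P,Q) : X = PQ\} \to X$; properness of projection from projective space makes each image closed. This closed set is a \emph{proper} subvariety precisely because there exist irreducible polynomials with full support on $\Z_N$ when $d \ge 2$ — the phenomenon that fails for $d = 1$, where every polynomial splits into linear factors, creating the $2^N$ classical ambiguities. A proper algebraic subvariety of $\C^N$ has Lebesgue measure zero, so the set of ground-truth $\vx$ admitting any non-trivial $\widehat\vx$ is measure zero, as claimed. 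The step I expect to be the real obstacle is this last one: establishing that the reducibility locus is a \emph{proper} Zariski-closed set — closedness needs the degree bound plus properness of the projective projection (an elimination-theory argument), and properness needs an explicit full-support irreducible polynomial, which is exactly where $d \ge 2$ is used essentially.
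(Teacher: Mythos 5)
The paper offers no proof of its own here --- it simply cites Hayes' original papers --- and your argument is precisely the standard Hayes--McClellan proof contained in those references: reduce the magnitude data to the autocorrelation, recast it as the polynomial identity $\widehat X\,\widetilde{\widehat X} = X\,\widetilde X$ in the UFD $\C[z_1,\dots,z_d]$, and observe that reducibility (the only source of non-trivial ambiguity) is confined to a proper Zariski-closed, hence Lebesgue-null, subset of coefficient space when $d\ge 2$. The outline is correct, and you rightly flag the one step requiring real work (properness of the reducibility locus, which needs an explicit full-support irreducible polynomial in $\ge 2$ variables); no gap beyond that.
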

\begin{proof}
    See \cite{Hayes_1982_Multidimensional, Hayes_1982_Reducible, bendory_2017_a}.
\end{proof}

\subsection{Fourier phase retrieval methods}
Assuming Hayes' result applies, much of the prior research then focuses on how to obtain the solution efficiently and robustly (against noise). Earlier approaches mostly follow the Gerchberg-Saxton algorithm \cite{Gerchberg_1972} where the solution is obtained by alternatingly projecting the estimate onto a subspace in the time domain and constraining the Fourier magnitude. Follow-up works of Fienup \cite{Fienup_1982_Comparison,Fienup_1986_Stagnation,Fienup_1987_Astronomy,Fienup_1990_Numerical,Fienup_1992_Joint} introduced additional spatial restrictions and projections. The Hybrid Input-Output (HIO) algorithm by Fienup is still a popular algorithm today, even though the algorithm is not guaranteed to converge, and when it converges, it is not necessarily a global minimum.

\begin{figure*}[t!]
    \centering
    \begin{tabular}{cc}
        \begin{tabular}{ccc}
            $\vphi$ & $\widehat{\vphi}$ & $\vphi - \widehat{\vphi}$ \\
            \includegraphics[width=0.12\linewidth]{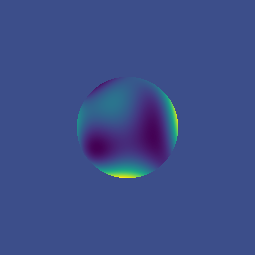} & \includegraphics[width=0.12\linewidth]{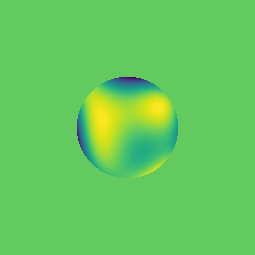} & \includegraphics[width=0.12\linewidth]{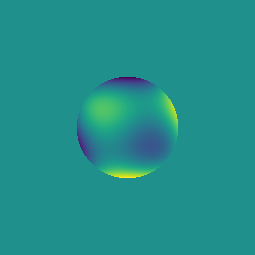} \\      
            $\abs{\mF \vx}^2$ & $\abs{\mF \widehat{\vx}}^2$ & $\abs{\mF \widetilde{\vx}}^2$ \\
            \includegraphics[width=0.12\linewidth]{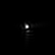} & \includegraphics[width=0.12\linewidth]{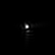} & \includegraphics[width=0.12\linewidth]{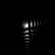}
        \end{tabular} &
        \begin{tabular}{ccc}
            $\vphi$ & $\widehat{\vphi}$ & $\vphi - \widehat{\vphi}$ \\ 

            \includegraphics[width=0.12\linewidth]{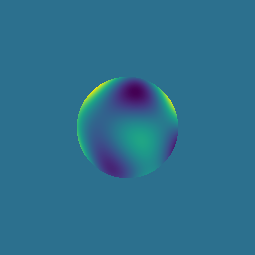} & \includegraphics[width=0.12\linewidth]{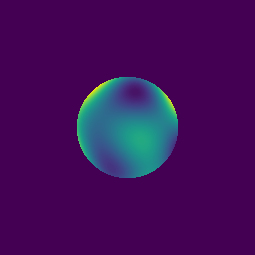} & \includegraphics[width=0.12\linewidth]{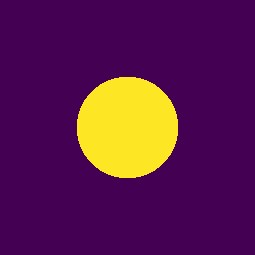} \\
            $\abs{\mF \vx}^2$ & $\abs{\mF \widehat{\vx}}^2$ & $\abs{\mF \widetilde{\vx}}^2$ \\
            \includegraphics[width=0.12\linewidth]{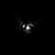} & \includegraphics[width=0.12\linewidth]{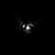} & \includegraphics[width=0.12\linewidth]{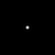}
        \end{tabular} \\
        (a) Correction by phase of $\widehat{\vx} = \calT_{\text{flip}}(\vx)$ & (b) Correction by phase of $\widehat{\vx} = \calT_{\text{dc}}(\vx)$
    \end{tabular}
    \caption{Comparing the equivalence classes of phase retrieval and wavefront estimation. Phase aberrations [Top] are shown with their resulting PSFs [Bottom] with $\vy = |\mF\vx|^2$ where $\vy$ is the PSF and $\vphi$ is the phase of $\vx$. In (a) we can see the PSFs match, though the optical correction does not achieve near-diffraction-limit performance. In (b) the PSFs also match, but $\vphi$ and $\vphi'$ only differ by an additive constant over the support. Thus, the optical correction achieves diffraction-limited performance.}
    \label{fig: pr_vs_we_equiv}
\end{figure*}

Beyond Gerchberg-Saxton and Fienup which belong to the category of projection algorithms \cite{Sayre_2002_Xray, Oszlanyi_2004_Abinitio, Palatinus_2013_Flipping}, there are other directions including gradient-based methods \cite{Yeh_2015_Ptychography, Candes_2015_Wirtinger, Fienup_2008_Transverse}, convex relations methods which lift the phase retrieval to a matrix completion problem \cite{Candes_2013_PhaseLift,Candes_2013_MatrixCompletion, Hassibi_2012_ISIT, Goldstein_2018_PhaseMax,Mallat_2015_MaxCut}, spectral methods to initialize the estimation \cite{Candes_2015_Wirtinger, YueLu_2019_SpectralInit, YueLu_2022_PMLR, Valzania_2021_SpectralInit}, and Bayesian methods such as approximate message passing \cite{Schniter_2016_AMP} and belief-propagation \cite{Montanari_Book}.

\subsection{Other known techniques}
\textbf{Coded Imaging}. A significant amount of the literature uses multiple \emph{coded} measurements pioneered by Cand\`{e}s \cite{Candes_2013_PhaseLift} and other researchers \cite{Hassibi_2012_ISIT, Goldstein_2018_PhaseMax,Mallat_2015_MaxCut}. Coded sampling can be achieved in various ways such as (i) masking \cite{Johnson_2008_Diffractive, Levin_2007_CodedAperture}, (ii) optical grating \cite{Loewen_Book, Asif_2016_FlatCam}, (iii) oblique illumination \cite{Faridian_2010_Nano}. During the sensing process, the coding scheme (e.g., the mask) is shuffled to ensure a certain degree of de-correlation across the \emph{multiple} measurements. This will then allow the reconstruction algorithm to achieve a globally unique solution. For us, since we are interested in real-time applications, we aim to perform wavefront estimation using a single measurement.

\textbf{Neural Wavefront Sensing}. A relatively new approach is neural wavefront shaping by Feng. et al. \cite{Feng_2023_NeuWS} where the wavefront signal is encoded by a neural radiance field. By jointly optimizing the reconstruction neural network and the phase mask, wavefront estimation can be achieved. Neural wavefront shaping relies on redundant measurements; it is reported in \cite{xie_2024_a} that they need at least 16 different spatial light modulator (SLM) patterns (often tens to hundreds).

\textbf{Deep Learning}. Deep learning approaches are plentiful. But if we closely inspect these methods, they are conceptually very similar to traditional methods, except that some steps are implemented through neural networks. For example, when solving the optimization, instead of resorting to classical methods such as ADMM, one can use Plug-and-Play \cite{Kamilov_2023_tutorial} by integrating pre-trained image denoisers to improve the prior. If we want more explicit control over the prior, generative priors can be used \cite{Hand_2018_GAN, Bohra_2022_Bayesian}. One can also replace the entire optimization-based approach with an end-to-end neural network, such as \cite{Metzler_2018_prDeep, Kappeler_2017_PtychNet}. A recent tutorial, Lam et al. \cite{EdmundLam_2024_Review}, has an excellent taxonomy of deep learning methods for phase retrieval to which we refer interested readers.

\section{Uniqueness of Wavefront Estimation}

\subsection{What is wavefront estimation?}
\label{sec: what_is_we}
To motivate the wavefront estimation problem, we need to briefly describe how adaptive optics work. When an image is formed by an optical system, the response of the system is described by its point spread function (PSF). The PSF is said to be at the diffraction limit when it is free of aberrations. Aberrations are introduced when imaging through random media or in the presence of sub-optimal optical components.

The goal of adaptive optics is to compensate for these aberrations in real-time. Let $\phi[\vn]$ be the ground truth and $\widehat{\phi}[\vn]$ be the estimate. The mathematical operation of this compensation is then as follows:
\begin{equation}
    \underset{\text{compensated signal}}{\underbrace{\widetilde{x}[\vn]}} \bydef \underset{\text{original $x[\vn]$}}{\underbrace{\abs{x[\vn]} e^{j \phi[\vn]}}} \times \underset{\text{compensation}}{\underbrace{e^{-j \widehat{\phi}[\vn]}}}.
\end{equation}
The quality of this operation is then determined by the resulting PSF, i.e., $\abs{\mF \vxtilde}^2$. If $\abs{\mF \vxtilde}^2$ is a diffraction-limited spot, then the compensation can be declared a success.

\textbf{Example 1.} [Conjugate Flip] Consider a ground truth phase $\phi[\vn]$ and the corresponding signal $x[\vn] = |x[\vn]|e^{j\phi[\vn]}$. We apply a conjugate flip trivial ambiguity to $x[\vn]$ by considering $\widehat{\phi}[\vn]  = -\phi[-\vn]$ so that $\widehat{x}[\vn] = x[\vn] e^{j \widehat{\phi}[\vn]} = \calT_{\text{flip}}(x[\vn]) = x^*[-\vn]$. It is easy to show that $|\mF\vx| = |\mF\widehat{\vx}|$. An adaptive optics system that uses this $\widehat{\vphi}$ to compensate for the phase $\vphi$ producing $\vphi - \widehat{\vphi}$ is shown in \fref{fig: pr_vs_we_equiv}(a). If we take the corresponding Fourier magnitude square of the compensated signal, we will see a worsened PSF.

\textbf{Example 2.} [DC Offset] Consider a ground truth phase $\phi[\vn]$ and the corresponding signal $x[\vn] = |x[\vn]|e^{j\phi[\vn]}$. This time we consider a DC offset trivial ambiguity $\widehat{\phi}[\vn] = \phi[\vn] + \phi_0$ so that $\widehat{x}[\vn] = x[\vn] e^{j \widehat{\phi}[\vn]} = \calT_{\text{dc}}(x[\vn]) = x[\vn]e^{j\phi_0}$. We can again show that $|\mF\vx| = |\mF\widehat{\vx}|$. Using an adaptive optics correction with this $\widehat{\vphi}$ results in a phase profile as shown in \fref{fig: pr_vs_we_equiv}(b). If we take the corresponding Fourier magnitude square of the compensated signal, we will see a diffraction-limited spot.

In summary, the conjugate flip degrades our system while a DC offset is immaterial. This leads to the question: How do we check whether a candidate solution $\widehat{\vx}$ is worsening the imaging system? One way to do this is to consider a criterion:
\begin{equation}
\underset{|\vx^H\vx|, \text{ ideal PSF}}{\underbrace{\sum_{\vn} |x[\vn]|^2 }} \overset{?}{=} \underset{\vx^H \widehat{\vx}, \text{ compensated PSF}}{\underbrace{\sum_{\vn} |x[\vn]|^2e^{j(\phi[\vn]-\widehat{\phi}[\vn])} }}
\label{eq: intuition}
\end{equation}
This equation checks whether the complex conjugate $\vx^H$ is ``aligned'' with the estimated phase $\widehat{\vx}$. If it does, then the product will give us the magnitude square of the signal. It is easy to show that the conjugate flip does not satisfy this criterion whereas the DC offset does.

Based on the intuition given in \eqref{eq: intuition}, we define the wavefront estimation problem as follows:
\begin{definition}[Wavefront Estimation]
Let $\vx \in \C^N$ be the ground truth complex signal, and define $\vy = \abs{\mF \vx}^2 \in \R^M$ be the Fourier magnitude. We define the wavefront estimation problem as
\begin{equation}
    \text{Find} \; \widehat{\vx} \; \text{such that} \quad \vy    = \abs{\mF \widehat{\vx}}^2 \; \text{and} \; \abs{\vx^H \widehat{\vx}} = |\vx^H\vx|.
    \label{eq: we_problem}
\end{equation}
We call $\abs{\vx^H \widehat{\vx}} = |\vx^H\vx|$ the \textbf{optical correction constraint}.
\end{definition}
Comparing \eqref{eq: we_problem} and \eqref{eq: pr_matrix}, we recognize that while both phase retrieval and wavefront estimation aim to recover the phase from Fourier magnitude, wavefront estimation requires an optical correction constraint to reject trivial ambiguities. Therefore, wavefront estimation is a sub-problem of phase retrieval.

\subsection{Geometry of pupil}
The shape of the pupil plays a big role in wavefront estimation. We define the pupil function below.
\begin{definition}[Pupil]
    The pupil $\vp \in \R^N$ is an indicator function defined such that
    \begin{equation}
        p[\vn] = \begin{cases}
            1 & \abs{x[\vn]} \neq 0 \\
            0 & \abs{x[\vn]} = 0
        \end{cases}, \; \forall \vn \in \Z_N.
    \end{equation}
\end{definition}
Therefore, the vector $\vp$ denotes a binary mask. In the absence of any phase aberration, i.e., $\phi[\vn] \equiv 0$ for all $\vn$, the shape of the pupil will determine the diffraction-limited spot.

\begin{definition}[Diffraction limited PSF]
    Let $\vx \in \C^N$ and $x[\vn] = p[\vn] e^{j \phi[\vn]}$ be the system response, i.e., the magnitude is $|\vx| = \vp$. If $\phi[\vn] \equiv 0$ for all $\vn$, then the PSF given by
    \begin{equation}
        \vy = \abs{\mF \vp}^2
    \end{equation}
    is called a diffraction-limited PSF.
\end{definition}

The diffraction-limited PSF provides us with a convenient way to measure the quality of the estimated wavefront. In the optics literature, the most commonly used metric is known as the Strehl ratio \cite{Tyson_Frazier_Book}.
\begin{definition}[Strehl Ratio]
Let $\widehat{\vx}$ be the estimated optical aberration. We define the Strehl ratio as
\begin{equation}
        q \bydef \max\left\{ \abs{\mF \widehat{\vx}}^2 \right\} / \max \left\{ \abs{\mF \vp}^2 \right\}.
        \label{eq: strehl}
\end{equation}
where $q$ is a number between $0 \le q \le 1$.
\end{definition}
A Strehl ratio of $q = 1$ corresponds to a diffraction limited PSF while a Strehl ratio $q \ll 1$ is a severely distorted PSF.

\subsection{Main result 1: Impact of the constraint}
We now present our first main result regarding the wavefront estimation problem formulation \eqref{eq: we_problem}. We show that while there are many trivial ambiguities of the original phase retrieval problem in \eqref{eq: pr_matrix}, by adding the optical correction constraint in \eqref{eq: we_problem} the set of solutions reduces to complex scalar multiples of the true solution.

\begin{theorem}
\label{th: theorem2}
Let $\vx \in \C^N$ be the ground truth complex signal, and let $\widehat{\vx} \in \Omega(\vx)$ be one of the trivial solutions obtained by solving the original phase retrieval \eqref{eq: pr_matrix}. If $\widehat{\vx}$ satisfies the optical correction constraint $|\vx^H \widehat{\vx}| = |\vx^H\vx|$ defined in \eqref{eq: we_problem}, then it is necessary that
\begin{equation}
\widehat{x}[\vn] = x[\vn] e^{j c},
\end{equation}
for some $c \in [0, 2\pi)$.
\end{theorem}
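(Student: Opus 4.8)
The plan is to exploit the explicit group structure of the trivial ambiguity set $\Omega(\vx)$ and then invoke an equality case of the Cauchy--Schwarz inequality. First I would put every element of $\Omega(\vx)$ into one of two normal forms. Each generator $\calT_{\text{dc}},\calT_{\text{shift}},\calT_{\text{flip}}$ preserves the $\ell^2$ norm, and they obey elementary relations: DC offsets commute with circular shifts and add mod $2\pi$; circular shifts add mod $N$; $\calT_{\text{flip}}$ is an involution; and conjugation by $\calT_{\text{flip}}$ negates a shift and flips the sign of a DC offset, i.e. $\calT_{\text{flip}}\calT_{\text{shift},\vell}\calT_{\text{flip}}=\calT_{\text{shift},-\vell}$ and $\calT_{\text{flip}}\calT_{\text{dc},\phi_0}\calT_{\text{flip}}=\calT_{\text{dc},-\phi_0}$. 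Pushing all flips to one end of an arbitrary word in the generators, every $\widehat\vx\in\Omega(\vx)$ should reduce to one of
\begin{align}
\text{(A)}\quad & \widehat{x}[\vn] = e^{j\phi_0}\,x[\vn+\vb], \\
\text{(B)}\quad & \widehat{x}[\vn] = e^{j\phi_0}\,x^*[\vb-\vn],
\end{align}
for some $\vb\in\Z_N$ and $\phi_0\in[0,2\pi)$.

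Next I would rewrite the optical correction constraint as a tight Cauchy--Schwarz bound. Since $\vx^H\vx=\sum_\vn\abs{x[\vn]}^2=\norm{\vx}^2\ge 0$, the constraint reads $\abs{\vx^H\widehat\vx}=\norm{\vx}^2$. In case (A), $\vx^H\widehat\vx=e^{j\phi_0}\sum_\vn\overline{x[\vn]}\,x[\vn+\vb]$ is $e^{j\phi_0}$ times the circular autocorrelation of $x$ at lag $\vb$; since a circular shift preserves the norm, Cauchy--Schwarz gives $\abs{\vx^H\widehat\vx}\le\norm{\vx}^2$, with equality exactly when $x[\vn+\vb]=\lambda\,x[\vn]$ for all $\vn$, necessarily with $\abs{\lambda}=1$. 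In case (B), $\vx^H\widehat\vx=e^{j\phi_0}\,\overline{\sum_\vn x[\vn]\,x[\vb-\vn]}$; applying Cauchy--Schwarz to the sequences $x[\vn]$ and $\overline{x[\vb-\vn]}$ (the latter has the same $\ell^2$ norm as $x$ after reindexing) gives $\abs{\vx^H\widehat\vx}\le\norm{\vx}^2$, with equality exactly when $x[\vn]=\alpha\,\overline{x[\vb-\vn]}$ for all $\vn$, necessarily with $\abs{\alpha}=1$.

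Finally, substituting each equality condition back into its normal form closes the argument: in case (A), $\widehat{x}[\vn]=e^{j\phi_0}\lambda\,x[\vn]$; in case (B), $x^*[\vb-\vn]=\alpha^{-1}x[\vn]$ so $\widehat{x}[\vn]=e^{j\phi_0}\alpha^{-1}x[\vn]$; either way $\widehat{x}[\vn]=x[\vn]\,e^{jc}$ for a suitable $c\in[0,2\pi)$. The hard part will be the normal-form reduction: I must track the interaction between the circular shift and the index reversal in $\calT_{\text{flip}}$ --- including the mod-$N$ wraparound --- carefully enough to be certain that forms (A) and (B) exhaust $\Omega(\vx)$ with all parameters reduced into $\Z_N$ and $[0,2\pi)$; everything afterward is the textbook Cauchy--Schwarz equality analysis. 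It is worth noting that no genericity assumption on $\vx$ is needed: the degenerate configurations that attain equality --- a pupil periodic with period $\vb$, or a signal conjugate-symmetric up to a phase and a shift --- still produce a $\widehat\vx$ that is a pure global-phase multiple of $\vx$, exactly as claimed, which is consistent with Examples~1 and~2 where the conjugate flip generically fails the constraint while the DC offset always satisfies it.
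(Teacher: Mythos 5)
Your proof is correct, and its engine is the same as the paper's: the equality case of the Cauchy--Schwarz inequality. The difference is that the paper applies Cauchy--Schwarz directly to the pair $(\vx,\widehat{\vx})$: since every trivial ambiguity preserves the $\ell^2$ norm, $\norm{\widehat{\vx}}=\norm{\vx}$, so the constraint $\abs{\vx^H\widehat{\vx}}=\norm{\vx}^2$ saturates the bound $\abs{\vx^H\widehat{\vx}}\le\norm{\vx}\,\norm{\widehat{\vx}}$ and forces $\widehat{\vx}=\nu\vx$ with $\abs{\nu}=1$ in one step, with no case analysis. Your normal-form reduction of $\Omega(\vx)$ into shift-type words (A) and flip-type words (B) is valid --- it is the standard semidirect-product structure of the trivial-ambiguity group --- but it is precisely the step you flag as ``the hard part,'' and it buys you nothing for this theorem: the per-case Cauchy--Schwarz arguments on the autocorrelation at lag $\vb$ and on the conjugate-symmetry sum are the direct argument in disguise, since in both cases the second sequence is just $\widehat{\vx}$ up to a unimodular factor. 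Where that classification does pay off is in the paper's Theorem 3, whose proof eliminates the DC, flip, and shift generators one at a time under the pupil-geometry conditions; your scaffolding is really the skeleton of that later argument. One point worth stating explicitly in either version (the paper leaves it implicit too): the passage from ``the constraint holds'' to ``Cauchy--Schwarz is tight'' requires $\norm{\widehat{\vx}}=\norm{\vx}$, and this norm preservation is exactly where the hypothesis $\widehat{\vx}\in\Omega(\vx)$ enters the proof.
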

\begin{proof}
See Appendix.
\end{proof}
The following two Corollaries discuss the impact of using solutions to \eqref{eq: we_problem} in the context of phase conjugate AO.

\begin{corollary}
\label{corollary 1}
Let $\vx \in \C^N$ be the ground truth complex signal, and let $\widehat{\vx} \in \Omega(\vx)$ be the solution of \eref{eq: we_problem}. Then, if we compensate $\vx$ by multiplying the phase component of $\widehat{\vx}$, the compensated signal is
\begin{equation*}
\widetilde{x}[\vn] = |x[\vn]|e^{j\phi[\vn]} \times e^{-j\widehat{\phi}[\vn]} = |x[\vn]| e^{-jc},
\end{equation*}
for some $c \in [0,2\pi)$.
\end{corollary}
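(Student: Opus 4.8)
The plan is to obtain this as an immediate consequence of \thref{th: theorem2}. Since $\widehat{\vx} \in \Omega(\vx)$ is a solution of \eref{eq: we_problem}, it satisfies both $\vy = \abs{\mF\widehat{\vx}}^2$ and the optical correction constraint $\abs{\vx^H\widehat{\vx}} = \abs{\vx^H\vx}$; hence \thref{th: theorem2} applies verbatim and yields $\widehat{x}[\vn] = x[\vn]\,e^{jc}$ for a single fixed $c \in [0,2\pi)$. All of the substantive work of the corollary is already carried by that theorem.

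The next step is to translate this complex-scalar identity into the statement about the phase vector $\widehat{\vphi}$ that is actually fed into the phase-conjugate correction described in \sref{sec: what_is_we}. Writing $x[\vn] = \abs{x[\vn]}e^{j\phi[\vn]}$, the identity $\widehat{x}[\vn] = x[\vn]e^{jc}$ forces $\abs{\widehat{x}[\vn]} = \abs{x[\vn]}$ and, at every index $\vn$ on the pupil support (where $\abs{x[\vn]} \neq 0$), $\widehat{\phi}[\vn] \equiv \phi[\vn] + c \pmod{2\pi}$. Substituting into the compensation operation,
\begin{equation*}
\widetilde{x}[\vn] = \abs{x[\vn]}e^{j\phi[\vn]} \times e^{-j\widehat{\phi}[\vn]} = \abs{x[\vn]}e^{j(\phi[\vn]-\phi[\vn]-c)} = \abs{x[\vn]}e^{-jc},
\end{equation*}
which is precisely the asserted form.

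I expect essentially no obstacle; the only point meriting a remark is the behavior off the support. At indices with $\abs{x[\vn]} = 0$ the phases $\phi[\vn]$ and $\widehat{\phi}[\vn]$ are not well-defined, so the relation $\widehat{\phi}[\vn] = \phi[\vn]+c$ need not hold there — but this is harmless, since the prefactor $\abs{x[\vn]} = 0$ annihilates those terms and $\widetilde{x}[\vn] = 0 = \abs{x[\vn]}e^{-jc}$ regardless of the phase assigned. Hence the identity holds for all $\vn \in \Z_N$. I would close by noting the interpretation: $\widetilde{\vx} = e^{-jc}\abs{\vx}$ is a globally constant-phase copy of the pupil amplitude, so $\abs{\mF\widetilde{\vx}}^2 = \abs{\mF\abs{\vx}}^2$ is diffraction-limited (Strehl ratio $1$), which is exactly why the residual DC ambiguity left by \thref{th: theorem2} is the benign one for adaptive optics.
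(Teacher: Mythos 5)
Your proof is correct and follows essentially the same route as the paper's: invoke Theorem~\ref{th: theorem2} to get $\widehat{x}[\vn] = x[\vn]e^{jc}$, read off $\widehat{\phi}[\vn] = \phi[\vn] + c$, and substitute into the compensation formula. Your extra remark about indices off the pupil support (where the phase is ill-defined but the zero amplitude makes the identity hold vacuously) is a small refinement the paper omits, but it does not change the argument.
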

\begin{proof}
Theorem~\ref{th: theorem2} implies that $\widehat{x}[\vn] = x[\vn] e^{j c}$, which gives us $\widehat{\phi}[\vn] = \phi[\vn] + c$ for $c \in [0, 2\pi)$. Thus $\forall \vn \in \Z_N$,
\begin{align*}
\widetilde{x}[\vn]
&\bydef |x[\vn]|e^{j\phi[\vn]} \times e^{-j\widehat{\phi}[\vn]} \\
&= |x[\vn]|e^{j(\phi[\vn] - \phi[\vn] - c)} = \abs{x[\vn]} e^{-jc}.
\end{align*}
\end{proof}

In the special case where the magnitude of $\vx$ is just the pupil function, i.e., $\vx = \vp e^{j\vphi}$, the solution of \eref{eq: we_problem} will give us the ideal Strehl ratio of $q = 1$ as summarized in Corollary 2.
\begin{corollary}
Let $\vx \in \C^N$ be the ground truth complex signal, and suppose that $x[\vn] = p[\vn] e^{j \phi[\vn]}$. Let $\widehat{\vx} \in \Omega(\vx)$ be the solution of \eref{eq: we_problem}. Then, the phase compensated signal $\widetilde{\vx}$ will have a Strehl ratio of $q = 1$.
\end{corollary}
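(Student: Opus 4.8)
The plan is to chain Corollary~\ref{corollary 1} with the definition of the Strehl ratio. By Corollary~\ref{corollary 1}, since $\widehat{\vx} \in \Omega(\vx)$ solves \eqref{eq: we_problem}, the compensated signal is $\widetilde{x}[\vn] = |x[\vn]| e^{-jc}$ for some fixed $c \in [0,2\pi)$. Here we are in the special case $x[\vn] = p[\vn]e^{j\phi[\vn]}$, so $|x[\vn]| = p[\vn]$, and therefore $\widetilde{x}[\vn] = p[\vn] e^{-jc}$, i.e. $\widetilde{\vx} = e^{-jc}\,\vp$.

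The next step is to compute the numerator of the Strehl ratio in \eqref{eq: strehl} for this $\widetilde{\vx}$. By linearity of $\mF$, $\mF\widetilde{\vx} = e^{-jc}\,\mF\vp$, so $|\mF\widetilde{\vx}|^2 = |e^{-jc}|^2\,|\mF\vp|^2 = |\mF\vp|^2$ pointwise, because the global unit-modulus scalar $e^{-jc}$ washes out under the magnitude-square. Hence $\max\{|\mF\widetilde{\vx}|^2\} = \max\{|\mF\vp|^2\}$. Plugging into \eqref{eq: strehl} gives
\begin{equation*}
q = \frac{\max\{|\mF\widetilde{\vx}|^2\}}{\max\{|\mF\vp|^2\}} = \frac{\max\{|\mF\vp|^2\}}{\max\{|\mF\vp|^2\}} = 1,
\end{equation*}
which is exactly the claim.

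There is essentially no obstacle here: the content is entirely in Theorem~\ref{th: theorem2} / Corollary~\ref{corollary 1}, which reduce any admissible trivial ambiguity to a global phase multiple of $\vx$, and the rest is the trivial observation that a global phase factor is invisible to $|\cdot|^2$. The only points worth stating carefully are (i) that the hypothesis $x[\vn] = p[\vn]e^{j\phi[\vn]}$ forces $|\vx| = \vp$ so Corollary~\ref{corollary 1} yields $\widetilde{\vx} = e^{-jc}\vp$ rather than merely $|\widetilde{x}[\vn]| = |x[\vn]|$, and (ii) that the denominator of the Strehl ratio in \eqref{eq: strehl} is by definition $\max\{|\mF\vp|^2\}$, the diffraction-limited peak, so numerator and denominator coincide. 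If anything, one should double-check that the Strehl ratio is well-defined (nonzero denominator), which holds since $\vp$ is a nonzero indicator and $\mF$ has a nonvanishing DC row, making $\max\{|\mF\vp|^2\} > 0$.
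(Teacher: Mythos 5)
Your proof is correct and follows exactly the same route as the paper's: invoke Corollary~\ref{corollary 1} to get $\widetilde{\vx} = e^{-jc}\vp$, then observe that the unit-modulus global factor is annihilated by the magnitude-square in the Strehl ratio. The extra remarks about the nonzero denominator are a harmless (and sensible) addition not present in the paper.
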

\begin{proof}
Corollary~\ref{corollary 1} implies that $\vxtilde = \abs{\vx} e^{-jc}$. Since $|\vx| = \vp$, we have $\vxtilde = \vp e^{-jc} $ for $c \in [0, 2\pi)$. Thus, by definition of the Strehl ratio, the phase compensated signal $\widetilde{\vx}$ will have a Strehl ratio of
\begin{align*}
         q &= \max\left\{ \abs{\mF \vxtilde}^2 \right\} / \max \left\{ \abs{\mF \vp}^2 \right\} \\
         &= \max\left\{ \abs{e^{-jc} \mF \vp}^2 \right\} / \max \left\{ \abs{\mF \vp}^2 \right\} = 1,
\end{align*}
since $e^{-jc}$ can be eliminated.
\end{proof}

These results indicate that if we can find a solution to \eqref{eq: we_problem}, we will recover the solution $\vx$ up to a DC shift, i.e., a multiplication with a complex exponential. Furthermore, if we can recover $\vx$ up to this set, the resulting phase conjugation can be said to be optimal.

\subsection{Main result 2: invertible PSF}
\label{sec: big_thm}

Hayes' Theorem and Theorem~\ref{th: theorem2} have little to do with the ease of recovery because one cannot resolve the trivial ambiguities. We now turn towards defining an invertible PSF and presenting the second key Theorem of this paper. We derive a set of conditions under which a PSF is invertible. The implication of these constraints is a set of \emph{imaging systems} that produce \emph{only} PSFs whose phase can be uniquely estimated.

\begin{definition}[Invertible PSF]
Let $\vx \in \C^N$ and $x[\vn] = p[\vn] e^{j \phi[\vn]}$ with $\vn \in \Z_N$. If for every $\widehat{\vx} \neq \vx$ that satisfies $\abs{\mF \widehat{\vx}} = \abs{\mF \vx}$ the optical correction constraint $\abs{\vx^H \widehat{\vx}} = |\vx^H\vx|$ is also satisfied, then we say $\vy = \abs{\mF \vx}^2$ is an invertible PSF.
\end{definition}
The following Theorem describes how to design a system that produces only invertible PSFs.

\begin{theorem}[Existence of invertible PSF imaging system]
\label{th: theorem_iPSF}
Let $\abs{x[\vn]} > 0$ for some subset of $\Z_N$. Suppose that the measurement $y[\vm]$ with $d \geq 2$ is oversampled by a factor $s \geq 2$ under measurement model \eqref{eq: measurement_model}. The set of trivial ambiguities $\Omega(\vx)$ is reduced to the unique solution $\{\vx\}$ (a.s.) if all of the following conditions are met:
\begin{enumerate}[(i)]
\item The pupil $p[\vn]$ is known a priori $\forall \vn \in \Z$;
\item The shifted pupil is aperiodic: $\calT_{\text{shift}}(\vp) \not= \vp$ for any amount of non-zero (circular) shift;
\item The flipped pupil is aperiodic: $\calT_{\text{shift}} \circ \calT_{\text{flip}}(\vp) \not= \vp$ for any amount of (circular) shift;
\item The average value of the phase offset by $\calT_{\text{dc}}$ is known.
\end{enumerate}
\end{theorem}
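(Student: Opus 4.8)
The plan is to combine Hayes' Theorem with an explicit normal form for the trivial ambiguity group $\Omega(\vx)$, and then to eliminate the surviving ambiguities one condition at a time. First, because $d \geq 2$ and the oversampling ratio satisfies $s \geq 2$, Hayes' Theorem applies: almost surely (over the ground-truth phase $\vphi$) there is no non-trivial ambiguity, so every $\widehat{\vx}$ with $\abs{\mF\widehat{\vx}}^2 = \abs{\mF\vx}^2$ already lies in $\Omega(\vx)$. Thus it suffices to show that the only member of $\Omega(\vx)$ that is also consistent with the a priori information --- the known pupil and the known average phase --- is $\vx$ itself.

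Second, I would derive a normal form for elements of $\Omega(\vx)$. The generators obey $\calT_{\text{flip}}\circ\calT_{\text{flip}} = \mathrm{id}$ (with circular indexing), $\calT_{\text{dc}}$ commutes with $\calT_{\text{shift}}$, and pushing $\calT_{\text{dc}}$ or $\calT_{\text{shift}}$ past $\calT_{\text{flip}}$ only negates the phase offset or the shift vector. By induction on the length of a composition word, every $\widehat{\vx} \in \Omega(\vx)$ therefore takes exactly one of the two forms
\begin{equation*}
\widehat{x}[\vn] = e^{jc}\, x[\vn - \vell] \qquad \text{or} \qquad \widehat{x}[\vn] = e^{jc}\, x^*[\vell - \vn],
\end{equation*}
for some $c \in [0, 2\pi)$ and some circular shift $\vell \in \Z_N$ (the first being ``shift then DC,'' the second ``flip then shift then DC'').

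Third, I would use the pupil and phase information to collapse this list. Condition (i) means an admissible solution must have magnitude equal to the known pupil, $\abs{\widehat{\vx}} = \vp$. In the first form $\abs{\widehat{x}[\vn]} = p[\vn - \vell] = \calT_{\text{shift}}(\vp)[\vn]$, so matching $\vp$ forces $\calT_{\text{shift}}(\vp) = \vp$, which by condition (ii) holds only for the zero shift; hence $\widehat{x}[\vn] = e^{jc}x[\vn]$. In the second form $\abs{\widehat{x}[\vn]} = p[\vell - \vn] = \big(\calT_{\text{shift}}\circ\calT_{\text{flip}}(\vp)\big)[\vn]$, so matching $\vp$ would need $\calT_{\text{shift}}\circ\calT_{\text{flip}}(\vp) = \vp$, which condition (iii) forbids for every shift --- so the conjugate-flipped branch is vacuous. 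What remains is the one-parameter family $\widehat{x}[\vn] = e^{jc}x[\vn]$, whose mean phase over $\mathrm{supp}(\vp)$ is the true mean phase plus $c$; equating this with the value known by condition (iv) forces $c = 0$, so $\widehat{\vx} = \vx$ and $\Omega(\vx)$ is reduced to $\{\vx\}$ almost surely.

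I expect the main obstacle to be the bookkeeping in the second step: one must verify that closing the three generators under arbitrary composition genuinely yields nothing beyond the two normal forms, i.e., that any interleaving of flips, shifts, and DC offsets collapses (up to a sign on $c$ and a relabeled $\vell$) to ``an optional single flip, one shift, one DC offset.'' A secondary point to state carefully is the scope of ``almost surely'': Hayes' exceptional null set lives in the space of ground-truth phases, while conditions (i)--(iv) are deterministic constraints on the pupil and on side information, so intersecting them with Hayes' conclusion introduces no new exceptional set.
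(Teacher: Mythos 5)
Your proposal is correct and follows essentially the same route as the paper: Hayes' Theorem disposes of non-trivial ambiguities almost surely, and conditions (i)--(iv) eliminate the shift, flip, and DC branches of $\Omega(\vx)$ one at a time by matching the magnitude of each candidate against the known pupil. The paper packages these three eliminations as separate lemmas rather than via your explicit two-case normal form, but the underlying argument --- that the magnitude of any trivial ambiguity is a (possibly flipped) circular shift of $\vp$, which conditions (ii) and (iii) rule out except for the identity --- is the same.
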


\begin{proof}
See Appendix.
\end{proof}

Under the above conditions, the corresponding phase retrieval has \emph{only} one solution, i.e., $\widehat{\vx} = \vx$, almost surely (a.s.). Furthermore,     with the exception of $\vx \neq \vzero$ and condition (iv), the constraints are all on the \textit{geometry} of the pupil $\vp$ which, in the context of optics, can be \textit{designed}.

\textbf{Understanding the Geometry}. Among the four conditions above, the most interesting ones are (ii) and (iii). We use \fref{fig: shape} to illustrate the idea. The first support shown satisfies all conditions, the second does not satisfy condition (ii), and the final two do not satisfy condition (iii).

\begin{figure}[h]
\centering
\includegraphics[width=0.95\linewidth]{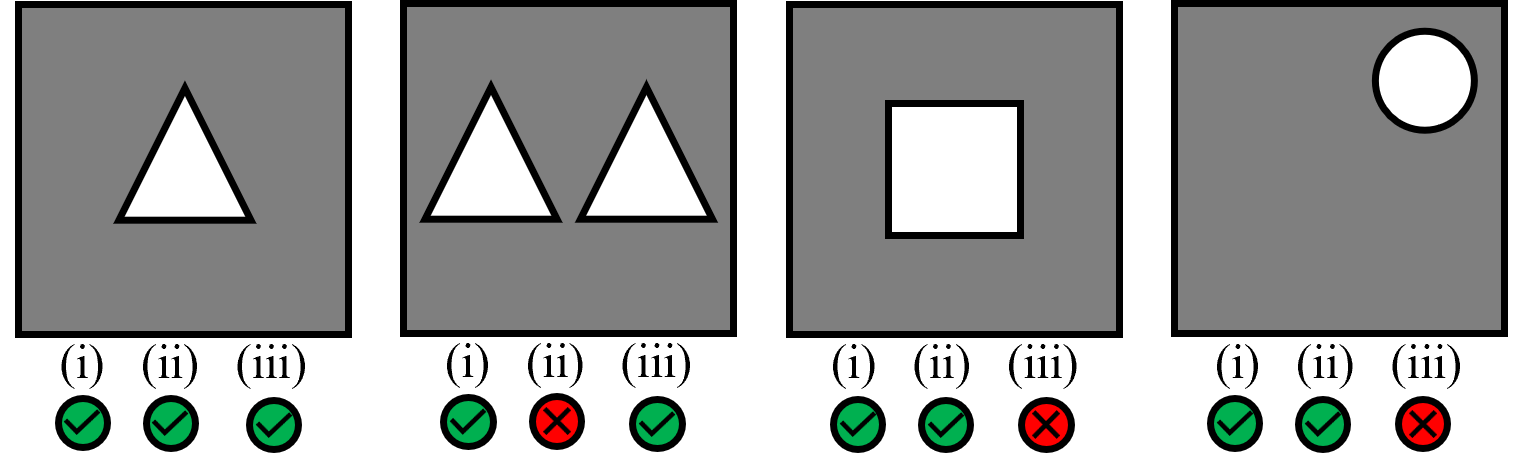}
    \vspace{-2ex}
\caption{Visualizing various pupil geometries that do and do not satisfy pupil geometry conditions (i-iii) of Theorem~\ref{th: theorem_iPSF}.}
\label{fig: shape}
\end{figure}

\textbf{Toy Example}. We use an example to highlight the impact of our conditions on the inverse problem \eqref{eq: we_problem}. Consider a phase that is the sum of two astigmatism Zernike polynomials, $\phi[\vn] = a_5 Z_5[\vn] + a_6 Z_6[\vn]$, with ground truth $a_5 = a_6 = 2$ and $Z_\ell[\vn]$ is the ${\vn}$th sample of the $\ell$th Zernike polynomials \cite{Noll_1976, Chan_TurbulenceBook}. We construct the signal as $x[\vn] = p[\vn] e^{j \phi[\vn]}$. Given the Fourier magnitude $\vy = |\mF\vx|^2$, we want to recover $(a_5, a_6)$.

We consider four different pupil shapes as shown in \fref{fig: non_convexity_toy}. The shapes of the pupils are known a priori, and they will replace $p[\vn]$ when defining $x[\vn]$. We are interested in (i) finding the phase retrieval solution by solving the original problem \eqref{eq: pr_matrix} using different shapes; and (ii) inspecting the optimization landscape, i.e., the objective function.

\begin{figure}[h!]
    \centering
    \includegraphics[width=0.95\linewidth]{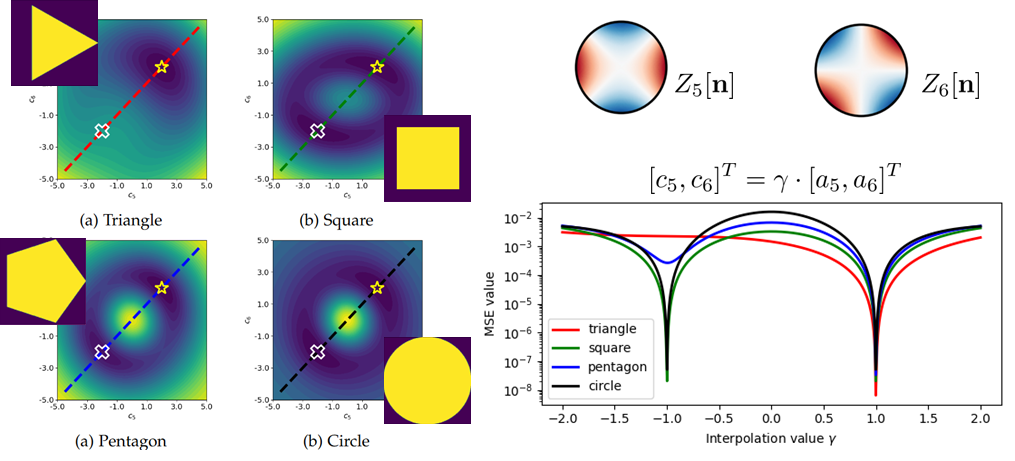}
    \vspace{-2ex}
    \caption{Toy example illustrating the impact of pupil geometry on recovery.}
    \label{fig: non_convexity_toy}
\end{figure}

\fref{fig: non_convexity_toy}(a)-(d) shows the objective function which is a 2D function of variables $(a_5,a_6)$. The true solution is $(a_5, a_6) = (2,2)$ (marked as a star), and the conjugate flip is $(a_5,a_6) = (-2,-2)$. We cut the cross section along the diagonal, and we plot the energy level on the right-hand side of \fref{fig: non_convexity_toy}. To make our plots consistent across the four cases, we use a dummy parameter $\gamma$ to control the sweep between solutions and plot along $\gamma (a_5, a_6)$, hence $\gamma = 1$ is the true solution case and $\gamma = -1$ is the conjugate flip case.

Based on the plot, we can make the following observations:
\begin{itemize}
\item \textbf{Circle}: There are two equally good global minima, no matter what algorithm we use there is a 50\% chance we will hit the correct solution and a 50\% chance the conjugate flip. We cannot distinguish the two.
\item \textbf{Square}: The behavior is very similar to a circle where there are two equally good global minima. Again, we cannot distinguish the two.
\item \textbf{Triangle}: There is only one local minimum, and this local minimum \emph{is} the correct solution.
\item \textbf{Pentagon}: While we can see two local minima, their objective values are different. This implies that as long as we have a reasonably designed algorithm that can search for a better local minimum, we will get a unique solution.
\end{itemize}
In summary, if we use a non-centrosymmetric pupil (such as a triangle or a pentagon), we can identify the correct solution.

\subsection{Hasn't symmetry breaking been used for a long time?} In optics, symmetry breaking has been reported by Fienup \cite{Fienup_1983_Reference, Fienup_1986_Stagnation, Fienup_1983_Uniqueness, Fienup_1982_Support, Fienup_1989_Wavefront} in the 1980's. The difference is that in Fienup's work, the approach is largely about finding the appropriate boundary conditions of the autocorrelation function \cite{Fienup_1986_Boundary}. Because the approach boils down to solving a system of linear equations, it is unclear how one can readily generalize the concept to analyze the optimal geometry or design neural networks. Other works such as \cite{Martinache_2013_Asymmetric} and \cite{Paxman_2019_Ambiguity} have specific choices of the aperture but they are limited to highly customized problems.

In signal processing, interestingly, there are limited discussions about symmetry breaking. The most relevant comments were made by Sun and collaborators \cite{JuSun_2020_BreakSymmetry, JuSun_2021_BreakSymmetry} where they observed the same phenomenon we present here, but their approach introduces constraints in the training dataset. For us, instead of manipulating the data we re-design the optics.

\subsection{Going from uniqueness in $\vx$ to $\vphi$}
Up to this point, we have considered the recovery of $\vx \in \C^N$, though we are often more interested in finding $\vphi \in \R^N$. However, the uniqueness in $\vx$ does not guarantee uniqueness in $\vphi$. To achieve uniqueness in $\vphi$ by uniqueness in $\vx$, we must meet two (additional) requirements related to phase unwrapping:
\begin{enumerate}
    \item The difference between neighboring samples of $\phi[\vn]$ can not exceed $\pm\pi$, thus the phase can be recovered up to an average value by phase unwrapping. One can interpret this as a smoothness constraint.
    \item The average value of $\phi[\vn]$ is 0, i.e., $\vp^T \vphi = 0$. Note we have already assumed this WLOG previously.
\end{enumerate}
Under these two conditions, the uniqueness in $\vx$ (from Theorem~\ref{th: theorem_iPSF}) can be transferred to $\vphi$ which follows from the phase unwrapping problem being uniquely solvable \cite{itoh_1982_a}.

\section{Wavefront Estimation Algorithms}
\label{sec: method}
Now that we understand how to design our pupil, the next big question is how to rapidly estimate the phase. In this section, we propose a few wavefront estimation algorithms for non-circular apertures. We additionally describe the use of neural representations to calibrate for the shape.

\subsection{Proposed methods for phase recovery}
In developing a wavefront estimation algorithm, uniqueness provides a significant benefit. If the problem has a unique solution, a network can be trained in a supervised fashion to invert PSFs. On the other hand, if an imaging system does not have an invertible PSF, then the network will not generalize as shown in \fref{fig: s2p_doesnt_work 1}. 

In what follows, we present a few \emph{efficient} algorithms that can invert PSFs in real time. Our algorithms are inspired by previous methods in the literature. We do not necessarily advocate for one method over another. Our goal here is to demonstrate that if the solution is unique then one has a lot of flexibility in designing the network. We consider three classes of algorithms: (i) direct methods, (ii) unrolled methods and (iii) neural representations.

\subsubsection{Proposed method 1 (inspired by \cite{mao2021accelerating}): Direct PSF inversion}
A PSF is band-limited by definition. Because of this, we consider writing a PSF as a linear combination of basis functions:
\begin{equation}
y[\vn] = \sum_{k=1}^K \beta_{k} \zeta_{k}[\vn],
\end{equation}
where $\beta_k \in \R$ is the $k$th basis coefficient and $\zeta_{k}[\vn]$ is the corresponding basis function, which can be obtained offline using supervised learning \cite{mao2021accelerating}. For notational simplicity we write $\vbeta = \{\beta_1,\ldots,\beta_K\}$. Similarly, we write $\phi[\vn]$ as
\begin{equation}
\phi[\vn] = \sum_{\ell=1}^L \alpha_{\ell} \xi_{\ell}[\vn],
\end{equation}
where $\xi_\ell[\vn]$ is an orthogonal basis over $\vp$, and $\valpha = \{\alpha_1,\ldots,\alpha_L\}$. We choose $\vxi$ to be the Zernike polynomials when $\vp$ is a circle.

While $\valpha \to \vbeta$ was previously considered in \cite{mao2021accelerating}, known as the Phase-to-Space (P2S) transform for simulating imaging through turbulence for simulation \cite{Chimitt_2020_OpEng, Chimitt_2022_RealTime}, wavefront estimation requires $\vbeta \to \valpha$. As discussed previously, $\vbeta \to \valpha$ is unique if Theorem~\ref{th: theorem_iPSF} is satisfied. We refer to $\vbeta \to \valpha$ as the Space-to-Phase (S2P) transform, which we approximate with a learned representation, $\stop_\theta(\vbeta) \approx \valpha$, using a lightweight 3-layer perceptron (MLP) with ReLU activation functions. The S2P method is illustrated in \fref{fig: s2p}.

To successfully train the S2P, aside from satisfying Theorem~\ref{th: theorem_iPSF}, we also find using basis functions that are orthogonal over the designed pupil $\vp$ helps significantly. For flexibility, we mask the Zernike polynomials by $\vp$ then perform Gram-Schmidt orthogonalization. \fref{fig: basis} shows various $\vxi$ basis functions obtained as described.

\begin{figure}[h]
    \centering
    \includegraphics[width=0.8\linewidth]{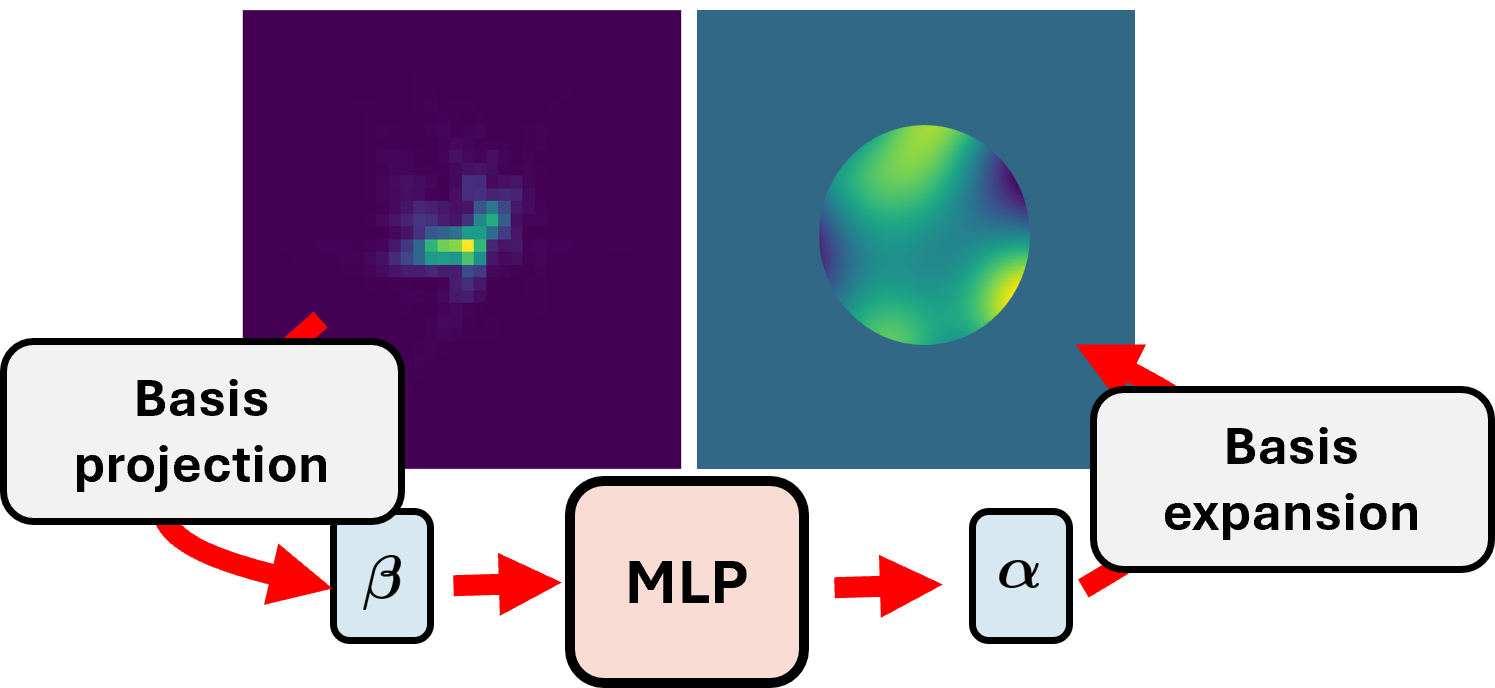}
    \vspace{-2ex}
    \caption{The Space-to-Phase (S2P) transform overview.}
    \label{fig: s2p}    
\end{figure}

\begin{figure}[h]
    \centering
    \includegraphics[width=0.9\linewidth]{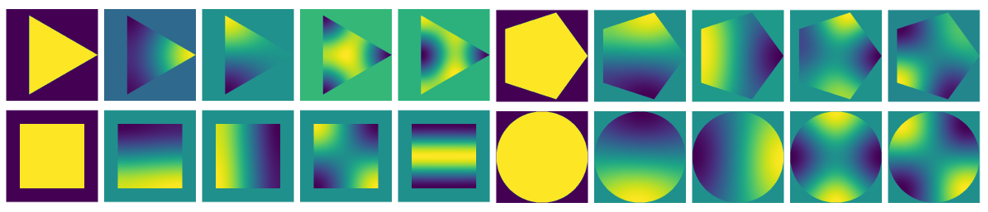}
    \vspace{-2ex}
    \caption{Visualization of phase basis functions for different pupil geometries.}
    \label{fig: basis}
\end{figure}   

The S2P has a few advantages:
\begin{enumerate}[(i)]
\item The S2P runs at $\sim\!5,200$ frames per second for a $128 \times 128$ PSF on an AMD Ryzen 7 5700X3D CPU. This speed can be attributed to the usage of the basis representations, reducing the problem dimensionality.
\item The S2P is robust to noise as high-frequency noise can not be captured in the low dimensional subspace spanned by the basis representation.
\end{enumerate}

Aside from representing \emph{both} the PSF and phase via basis representation, we additionally consider the mappings $\vy \to \valpha$ and $\vy \to \vphi$. Surprisingly, we find the MLP structure described for the S2P works for these mappings as well, speaking to the well-posed nature of recovery given uniqueness.

\subsubsection{Proposed method 2 (inspired by \cite{Gerchberg_1972, Fienup_1982_Comparison}): Unrolled PSF inversion}
Direct PSF inversion lacks a ``feedback mechanism'' such as those in the Gerchberg-Saxton \cite{Gerchberg_1972} and HIO algorithm by Fienup \cite{Fienup_1982_Comparison}. An interesting quality of the HIO algorithm is the intuition that the intermediate ``outputs'' of the algorithm need not be the estimation of the measurement, but a driving function for the iterative algorithm. Following the HIO algorithm, we consider an unrolled verison of the S2P in which the intermediate variables are not constrained to be wavefront estimations but simply driving functions for the final output. We summarize the unrolled S2P algorithm in Algorithm \autoref{alg: unrolled}.

\renewcommand{\algorithmiccomment}[1]{//#1}
\begin{algorithm}
\begin{algorithmic}
\REQUIRE Input: $\vy$, $\vp$, Output: $\widehat{\vphi}$
\STATE $\vy_0 \bydef \vy$
\FOR{$k = 1$ to $K$} 
    \IF{$k = 1$}
    \STATE $\valpha_{k} \leftarrow g_{\theta_k}\left( \vy_0 \right)$
    \ELSE
    \STATE $\valpha_{k} \leftarrow g'_{\theta_k}\left( [\vy_0, \vy_{k-1}, \valpha_{k-1}] \right)$ \hfill\COMMENT{\textit{Concatenated inputs}}
    \ENDIF
    \STATE $\phi_k[\vn] = \sum_{\ell} \alpha_{k,\ell} \xi_{\ell}[\vn]$
    \STATE $\vy_k \leftarrow \abs{\mF (\vp \odot e^{j \vphi_{k}})}^2$ \hfill\COMMENT{\textit{Can be accelerated via P2S}}
\ENDFOR
\RETURN $\widehat{\vphi} = \vphi_{K}$
\end{algorithmic}
\caption{Unrolled S2P network}
\label{alg: unrolled}
\end{algorithm}

\subsubsection{Proposed method 3 (inspired by \cite{Fienup_1993_Complicated, Feng_2023_NeuWS}): Neural representation}
Using a basis representation in the phase retrieval problem was considered by Fienup \cite{Fienup_1993_Complicated}. Using this representation, one can perform gradient-based optimization on the following loss:
\begin{equation}
    \widehat{\valpha} = \argmin{\valpha} \sum_{t=1}^T \norm{\vy_t - \abs{\mF \left( \vp \odot e^{j \sum_{\ell=1}^L \alpha_{\ell} \vxi_{\ell}} \right) }^2}^2,
    \label{eq: l2_we}
\end{equation}
where $\vxi = [\xi[\vn_1], \ldots, \xi[\vn_N]]^T$ was chosen to be the Zernike polynomials in \cite{Fienup_1993_Complicated}. Recently, the Zernike polynomials were used as an encoding for implicit neural representations (INR) was proposed in \cite{Feng_2023_NeuWS} used to iteratively solve a modified version of \eqref{eq: l2_we}. To fit our problem, we modify the Zernike encoding to be the $\vxi$-encodings shown in \fref{fig: basis}. We refer to this as a $\vxi$-Encoded INR.

For Fienup \cite{Fienup_1993_Complicated}, the Zernike representation reduced the dimensionality of the problem, offering some benefit over the standard GS and HIO algorithms, though the trade-off is that it relied on gradient-based methods making initialization crucial. The $\vxi$-encoded INR is a basis that is then passed through a non-linear transformation, hence it can be viewed as a generalization of a linear basis representation. We find that using the $\vxi$-encoded INR tends to perform well even in the presence of noise, likely due to the natural smoothness imposed by the INR. In the noiseless case, we will see the that the $\vxi$-Encoded INR performs at or near the performance of the GS and HIO algorithms.

\subsection{Neural representations for pupil shape calibration}
We now discuss the pupil calibration problem in which we no longer assume an exact knowledge of $p[\vn]$. Suppose we can illuminate the pupil with a known signal with $x[\vn] = c[\vn] e^{j \psi[\vn]}$ where $c[\vn]$ is the support of the calibration pattern and $\psi[\vn]$ is the phase of the calibration pattern over $c[\vn]$ with the support of $c[\vn]$ beyond that of $p[\vn]$. We define the following as the pupil identification problem:
\begin{definition}[Pupil Identification]
    Given $\boldsymbol{\psi}$ and $\vy$,
    \begin{equation}
        \text{Find} \;\; \vp \quad \text{subject to} \quad \vy = \abs{\mF (\vp \odot e^{j \vpsi})}^2,
        \label{eq: pupil_cal_problem}
    \end{equation}
    where $\odot$ represents the elementwise product of two vectors.
\end{definition}

\begin{theorem}[Uniqueness of Pupil Shape Calibration]
\label{th: theorem_calibration1}
    Let calibration pattern $\psi[\vn]$ be known over support $c[\vn]$ and measurement $y[\vn]$ satisfy Hayes' Theorem. If there does not exist an $\vell \in \Z_N$ such that $p[\pm \vn + \vell] \neq p[\vn]$ and $p[\pm \vn + \vell] e^{j\psi[\vn]} \in \Omega(\vx)$ then the pupil shape $p[\vn]$ that satisfies \eref{eq: pupil_cal_problem} is unique (a.s.).
\end{theorem}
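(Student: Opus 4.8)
The plan is to reduce the pupil-identification problem \eqref{eq: pupil_cal_problem} to an ordinary phase-retrieval problem for the complex signal $\vx = \vp \odot e^{j\vpsi}$, apply Hayes' Theorem to pin down the candidate solutions up to trivial ambiguities, and then convert ``uniqueness of $\vx$ modulo $\Omega(\vx)$'' into ``uniqueness of $\vp$'' by tracking only the \emph{magnitude} of the candidates. Concretely, I would fix an arbitrary binary pupil $\widehat{\vp}$ satisfying $\vy = \abs{\mF(\widehat{\vp}\odot e^{j\vpsi})}^2$, set $\widehat{\vx} \bydef \widehat{\vp}\odot e^{j\vpsi}$, and observe $\abs{\mF\widehat{\vx}}^2 = \abs{\mF\vx}^2$. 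Since the measurement satisfies the hypotheses of Hayes' Theorem, a.s. $\widehat{\vx}\in\Omega(\vx)$.

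The key step is to make $\Omega(\vx)$ explicit. Because $\calT_{\text{dc}}$ is multiplication by a unit-modulus scalar (hence commutes with $\calT_{\text{shift}}$ and $\calT_{\text{flip}}$) and because $\calT_{\text{flip}}\circ\calT_{\text{shift}}\circ\calT_{\text{flip}}$ is again a shift, every element of $\Omega(\vx)$ has the normal form $e^{jc}\,x[\vn-\vell]$ or $e^{jc}\,x^*[\vell-\vn]$ for some $c\in[0,2\pi)$ and $\vell\in\Z_N$. Taking magnitudes and using $\abs{\vx}=\vp$ together with $\abs{e^{j\vpsi}}=\vone$, I get $\widehat{\vp} = \abs{\widehat{\vx}}$ equal to $p[\vn-\vell]$ or $p[\vell-\vn]$, i.e.\ $\widehat{p}[\vn] = p[\pm\vn+\vell]$ for some choice of sign and some $\vell\in\Z_N$. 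But then $p[\pm\vn+\vell]\,e^{j\psi[\vn]} = \widehat{p}[\vn]\,e^{j\psi[\vn]} = \widehat{x}[\vn]\in\Omega(\vx)$, so that sign/shift pair satisfies the membership condition appearing in the theorem's hypothesis; the hypothesis therefore forces $p[\pm\vn+\vell]=p[\vn]$, i.e.\ $\widehat{\vp}=\vp$. As $\widehat{\vp}$ was arbitrary, the solution of \eqref{eq: pupil_cal_problem} is unique almost surely, completing the argument.

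The only delicate points are bookkeeping rather than conceptual. First, I must justify the normal form for $\Omega(\vx)$ above, i.e.\ that no further composition of the three generators produces a magnitude pattern outside the set of shifts and flip-shifts of $\vp$; this is a short group-theoretic check on the closure $\langle\{\calT_{\text{dc}},\calT_{\text{shift}},\calT_{\text{flip}}\}\rangle$. Second, I must be careful that the ``almost surely'' inherited from Hayes' Theorem --- a genericity statement over the signal $\vx=\vp\odot e^{j\vpsi}$, hence effectively over the calibration phase $\psi$ --- is precisely what propagates to the ``(a.s.)'' in the conclusion, so that non-trivial ambiguities of $\vx$ that might masquerade as alternative pupils live only on a measure-zero set. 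I would expect this genericity bookkeeping to be the main (mild) obstacle. Finally, it is worth remarking that, unlike \thref{th: theorem_iPSF}, no phase-offset condition analogous to its condition (iv) is needed here: the $\calT_{\text{dc}}$ generator leaves $\abs{\widehat{\vx}}$ unchanged and therefore cannot corrupt the recovered pupil, which is why only the shift and flip-shift conditions survive in the hypothesis.
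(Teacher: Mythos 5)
Your proposal is correct and follows essentially the same route as the paper's own proof: reduce \eqref{eq: pupil_cal_problem} to phase retrieval for $\vx=\vp\odot e^{j\vpsi}$, invoke Hayes' Theorem to confine any alternative solution to $\Omega(\vx)$ almost surely, observe that the magnitude of every element of $\Omega(\vx)$ is of the form $p[\pm\vn+\vell]$, and conclude by contraposition of the theorem's hypothesis. Your explicit normal form for $\Omega(\vx)$ and the remark that $\calT_{\text{dc}}$ cannot affect the recovered pupil are details the paper leaves implicit, but the argument is the same.
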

\begin{proof}
See Appendix.
\end{proof}

This theorem can be stated intuitively: if there is no other way to position the pupil to produce the same PSF, $p[\vn]$ is unique. This imposes two requirements on $\psi[\vn]$ if we want to identify the pupil shape using a single measurement. The first requirement is that for a single calibration pattern, $\psi[\vn]$ can not be too ``simple''. Second, it imposes a joint requirement on the shape of $p[\vn]$ and the chosen pattern $\psi[\vn]$.

\textbf{Neural mapping implementation}. For optimizing the geometry of the pupil, we use a coordinate-encoded MLP with a standard positional Fourier encoding \cite{mildenhall_2021_a} to represent the pupil. Using a coordinate-encoded MLP is useful since the pupil is typically not a collection of delta functions, but some shape with a specific structure. Denoting
\begin{equation}
     x^{(m)}_{\theta}[\vn] = p_{\theta}[\vn] e^{j \psi^{(m)}[\vn]},
\end{equation}
where $m$ indicates the corresponding measurement, the MLP aims to solve \eqref{eq: pupil_cal_problem} via $\ell^2$ loss.
The final $\vp_{\widehat{\theta}}$ is thresholded to give an estimate of the pupil. We display a qualitative result for pupil recovery in \fref{fig: purdue} where our goal is to recover $\vp$ from $M=\{1, 2, 10\}$ measurements.

\begin{figure}
    \centering
    \begin{tabular}{cccc}
        \includegraphics[width=0.2\linewidth]{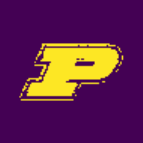} & \hspace{-2ex} \includegraphics[width=0.2\linewidth]{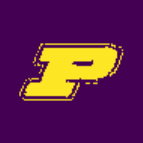} & \hspace{-2ex}\includegraphics[width=0.2\linewidth]{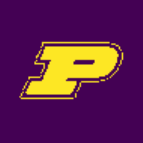} & \hspace{-2ex}\includegraphics[width=0.2\linewidth]{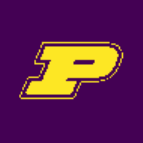} \\
        $\vp_{\widehat{\theta}}$ ($M = 1$) & \hspace{-2ex} $\vp_{\widehat{\theta}}$ ($M = 2$) &\hspace{-2ex}
        $\vp_{\widehat{\theta}}$ ($M = 10$) &\hspace{-2ex} $\vp$ (ground truth)
    \end{tabular}
    \caption{Qualitative performance for pupil calibration problem. The subfigures above show the estimated pupil $\vp_{\widehat{\theta}}$. Here, $M$ is the number of measurements.}
    \label{fig: purdue}
\end{figure}

\section{Experiments}
To help further distinguish between the problem of phase retrieval and wavefront estimation, we will compare the performance of various algorithms on each problem. As we will see, high performance in phase retrieval does not always transfer to wavefront estimation. We additionally built a prototype triangle pupil system and discuss some results.

\subsection{Simulation setting}
Our simulated experiments consist of drawing i.i.d. uniform random variables for Zernike coefficients in $\valpha$, $\alpha_{i} \sim U(-2, 2)$, with $\vphi$ formed by a weighted sum of the Zernike polynomials and windowed by the corresponding $\vp$. The PSF $\vy$ is then generated and noise is optionally added and clipped as $\text{max} (\vy + \vn, 0)$ where $\vn \sim \calN(0, \sigma^2 \mI)$. We choose $\vp$ for all shapes to (approximately) have the same amount of support pixels to keep resolution consistent.

Regarding noise level, a PSF is highly concentrated in the center of the grid, making a typical signal-to-noise ratio (SNR) calculation potentially misleading. The noisy case reported in Table~\ref{tab:big_exp} has an SNR of $-1$dB when computed over the center $60\times60$ crop and $-7.5$dB over the entire $128\times128$ grid. The S2P and related methods can operate directly on this center crop, allowing them some flexibility to ignore noise beyond the typical support of a PSF. For fairness, we remove the noise beyond the $60\times60$ crop when using the GS/HIO algorithms.

\begin{table*}
    \centering
    \caption{Performance on phase retrieval and wavefront estimation problems for various support shapes with and without noise, with speed comparisons. The best and second-best performing algorithms are indicated with cell shading and underlined, respectively. Algorithms proposed in this paper are indicated with a $\dagger$.}
    \label{tab:big_exp}
    
    \begin{tabular}{l|l|ccc|ccc|c}
        \toprule
        \multicolumn{9}{c}{\textbf{Task: \textit{Phase retrieval (trivial ambiguities do not matter).} Metric: \textit{Normalized correlation.}}} \\
        \midrule
        & & \multicolumn{3}{c|}{\textbf{Iterative methods}} & \multicolumn{3}{c|}{\textbf{Direct methods}} & \textbf{Other} \\
        \textbf{Support} & \textbf{Condition} & \textbf{GS} \cite{Gerchberg_1972} & \textbf{HIO} \cite{Fienup_1982_Comparison} & \textbf{$\vxi$-Encoded}$^\dagger$ & \textbf{S2P}$^\dagger$ & $(\vy \to \valpha)^\dagger$ & $(\vy \to \vphi)^\dagger$ & \textbf{Unrolled S2P}$^\dagger$ \\
        \midrule
        \multirow{2}{*}{Circle} & Noiseless & \cellcolor[gray]{0.9}$0.996 \pm 0.011$ & $\underline{0.996 \pm 0.014}$ & $0.993 \pm 0.033$ & $0.906 \pm 0.096$ & $0.913 \pm 0.095$ & $0.907 \pm 0.110$ & $0.928 \pm 0.087$ \\
        & Noisy & $\underline{0.919 \pm 0.033}$ & $0.901 \pm 0.055$ & \cellcolor[gray]{0.9}$0.986 \pm 0.015$ & $0.876 \pm 0.122$ & $0.887 \pm 0.114$ & $0.899 \pm 0.109$ & $0.869 \pm 0.132$ \\
        \midrule
        \multirow{2}{*}{Triangle} & Noiseless & $0.992 \pm 0.017$ & \cellcolor[gray]{0.9}$1.000 \pm 0.000$ & $0.989 \pm 0.031$ & $0.979 \pm 0.036$ & $0.987 \pm 0.030$ & $0.989 \pm 0.028$ & $\underline{0.996 \pm 0.004}$ \\
        & Noisy & $0.900 \pm 0.037$ & $0.881 \pm 0.062$ & \cellcolor[gray]{0.9}$0.981 \pm 0.22$ & $0.972 \pm 0.027$ & $0.971 \pm 0.034$ & \underline{$0.973 \pm 0.028$} & $0.965 \pm 0.040$ \\
        \midrule
        \multirow{2}{*}{Pentagon} & Noiseless & $0.991 \pm 0.012$ & \cellcolor[gray]{0.9}$1.000 \pm 0.005$ & $0.993 \pm 0.017$ & $0.979 \pm 0.034$ & $0.996 \pm 0.009$ & $0.996 \pm 0.010$ & $\underline{0.997 \pm 0.003}$ \\
        & Noisy & $0.912 \pm 0.033$ & $0.893 \pm 0.057$ & \cellcolor[gray]{0.9}$0.983 \pm 0.015$ & $0.940 \pm 0.081$ & $0.941 \pm 0.079$ & $\underline{0.944 \pm 0.077}$ & $0.936 \pm 0.079$ \\
        \midrule
        \multicolumn{9}{c}{\textbf{Task: \textit{Wavefront estimation (trivial ambiguities do matter).} Metric: \textit{Strehl ratio.}}} \\
        \midrule
        \multirow{2}{*}{Circle} & Noiseless & $0.582 \pm 0.379$ & $\underline{0.605 \pm 0.384}$ & $0.563 \pm 0.372$ & $0.533 \pm 0.261$ & $0.536 \pm 0.270$ & $0.529 \pm 0.278$ & \cellcolor[gray]{0.9}$0.630 \pm 0.274$ \\
        & Noisy & $0.141 \pm 0.070$ & $0.148 \pm 0.076$ & $0.489 \pm 0.285$ & $0.543 \pm 0.225$ & $\underline{0.549 \pm 0.234}$ & \cellcolor[gray]{0.9}$0.551 \pm 0.241$ & $0.545 \pm 0.205$ \\
        \midrule
        \multirow{2}{*}{Triangle} & Noiseless & $0.847 \pm 0.294$ & \cellcolor[gray]{0.9}$1.000 \pm 0.000$ & $0.847 \pm 0.028$ & $0.913 \pm 0.074$ & $0.954 \pm 0.063$ & $0.958 \pm 0.060$ & \underline{$0.976 \pm 0.019$} \\
        & Noisy & $0.155 \pm 0.067$ & $0.157 \pm 0.069$ & $0.664 \pm 0.233$ & \cellcolor[gray]{0.9}$0.837 \pm 0.086$ & $\underline{0.828 \pm 0.091}$ & $0.827 \pm 0.091$ & $0.801 \pm 0.103$ \\
        \midrule
        \multirow{2}{*}{Pentagon} & Noiseless & $0.633 \pm 0.379$ & \cellcolor[gray]{0.9}$0.998 \pm 0.036$ & $0.645 \pm 0.370$ & $0.928 \pm 0.080$ & $0.983 \pm 0.026$ & $0.983 \pm 0.028$ & \underline{$0.985 \pm 0.019$} \\
        & Noisy & $0.136 \pm 0.067$ & $0.145 \pm 0.073$ & $0.508 \pm 0.285$ & \cellcolor[gray]{0.9}$0.779 \pm 0.179$ & $0.767 \pm 0.196$ & $\underline{0.770 \pm 0.190}$ & $0.736 \pm 0.190$ \\
        \midrule
        \multicolumn{9}{c}{\textbf{Task: \textit{Computation speed on $128\times128$ PSF.} Metric: \textit{1 / Time to recover the phase (Hz.)}}} \\
        \midrule
        \multicolumn{2}{c|}{Speed (Hz.)} & $3.33$ & $2.38$ & $0.31$ & \cellcolor[gray]{0.9} $5200$ & \underline{$3700$} & $3300$ & $440$ \\
        \bottomrule
    \end{tabular}
\end{table*}

\subsection{Discussion of simulated results}
In Table~\ref{tab:big_exp} we compare performance across algorithms on phase retrieval and wavefront estimation. Following \cite{Unser_2023_Review}, we use the metric of normalized correlation $\vy^T \widehat{\vy} / \norm{\vy}\norm{\widehat{\vy}}$ for phase retrieval and the Strehl ratio \eqref{eq: strehl} for wavefront estimation. Hence, both metrics vary in the range of $[0, 1]$, with $1$ meaning perfect recovery for the relevant task. We report mean and standard deviation on a testing dataset of 1000 PSFs.

We summarize Table~\ref{tab:big_exp} as follows:
\begin{enumerate}
    \item \textbf{Iterative methods.} Iterative methods perform well on the noiseless phase retrieval task, even when the underlying phase has ambiguity (see circle performance). The $\vxi$-encoded method is best in the presence of noise. Performance on wavefront estimation is mixed; if the objective landscape is highly non-convex than the recovery is suboptimal. The HIO algorithm performs well in the noiseless case for both problems with the exception of circular support (due to ambiguity).
    \item \textbf{Direct methods.} Direct PSF inversion performs well on phase retrieval when the underlying problem is unambiguous; ambiguity appears to slightly impact performance on the phase retrieval problem. Direct methods perform better on wavefront estimation methods compared to their iterative counterparts, especially in the noisy case. The S2P is more robust in noise while the direct PSF alternatives perform better in the noiseless case (likely due to the use of basis representation).
    \item \textbf{Unrolled S2P.} The unrolled S2P tends to perform the best of the supervised methods on the phase retrieval and wavefront estimation tasks in the noiseless case. It is interesting to note that the performance is slightly worse on noisy wavefront estimation against direct PSF inversion, which we hypothesize is due to overparameterization or the iterative mechanism allowing it to (erroneously) fit the noise.
\end{enumerate}

We also report speeds of the various algorithms in Table~\ref{tab:big_exp}, where for iterative methods we have used 500 iterations for the HIO/GS algorithm and 300 iterations for the $\vxi$-encoded INR. The S2P performs approximately $2000\times$ faster than the GS/HIO algorithms and nearly $5\times$ the speed of the high-speed Shack-Hartmann wavefront sensor from Thor labs. In turbulence correction, the speed for optical correction can be on the scale of tens to hundreds of Hertz \cite{Tyson_Frazier_Book}, only the Shack-Hartmann sensor and our methods meet this requirement.

We offer a comparison between supervised and self-supervised training, that is
\begin{equation}
    \argmin{\theta} \norm{\vy - \abs{\mF \left( \vp \odot e^{j g_{\theta}(\vy)} \right) }^2}^2,
    \label{eq: self-sup}
\end{equation}
for the direct mapping $(\vy \to \vphi)$ in Table~\ref{tab:combined_comparison}. The triangle pupil performs best for wavefront estimation, highlighting the concept of \emph{identifiability} in the wavefront estimation problem; we empirically observe this to be related to the degree of pupil asymmetry (similar to iterative algorithms). The improvement for phase retrieval using a circular pupil is likely due to its ability to ignore ambiguity when trained in a self-supervised manner.

\begin{table}
    \centering
    \caption{Comparison of $(\vy \to \vphi)$ performance between Table~\ref{tab:big_exp} and self-supervised approaches (noiseless conditions)}
    \label{tab:combined_comparison}
    
    \begin{tabular}{l|c|c}
        \toprule
        \multicolumn{3}{c}{\textbf{Task: \textit{Phase retrieval.} Metric: \textit{Normalized correlation.}}} \\
        \midrule
        \textbf{Support} & \textbf{Supervised (Table~\ref{tab:big_exp})} & \textbf{Semi-Supervised} \\
        \midrule
        Circle & $0.907 \pm 0.110$ & $0.991 \pm 0.011$ \\
        Triangle & $0.989 \pm 0.028$ & $0.988 \pm 0.020$ \\
        Pentagon & $0.996 \pm 0.010$ & $0.982 \pm 0.031$ \\
        \midrule
        \multicolumn{3}{c}{\textbf{Task: \textit{Wavefront estimation.} Metric: \textit{Strehl ratio.}}} \\
        \midrule
        \textbf{Support} & \textbf{Supervised (Table~\ref{tab:big_exp})} & \textbf{Semi-Supervised} \\
        \midrule
        Circle & $0.529 \pm 0.278$ & $0.563 \pm 0.333$ \\
        Triangle & $0.958 \pm 0.060$ & $0.822 \pm 0.073$ \\
        Pentagon & $0.983 \pm 0.028$ & $0.587 \pm 0.287$ \\
        \bottomrule
    \end{tabular}
\end{table}

\subsection{Real experiments}
A prototype triangular pupil imaging system with a spatial light modulator (SLM) was built as shown in \fref{fig: real-setup}. We use a Holoeye LC 2012 SLM with a 520 nm. laser and record PSFs with a Teledyne Imaging FLIR Grasshopper 3.

Due to limitations in the system/optics we are able to utilize only $22\times22$ pixels for the SLM pattern (see \fref{fig: real-setup} for a visualization of the pupil size) hence our patterns contain only the first 10 Zernike polynomials to avoid aliasing. Due to a slight mismatch between simulated and real data, we train a network on real data only, leaving some for validation. We train a network $g_\theta(\vy) \approx \valpha$ on 135 PSF and phase pairs, and perform validation on the remaining 15. A few examples of real measurements, the phase aberration, and recovery, along with Zernike coefficients are shown in \fref{fig: real-results}.

\begin{figure}
    \centering
    \includegraphics[width=0.9\linewidth]{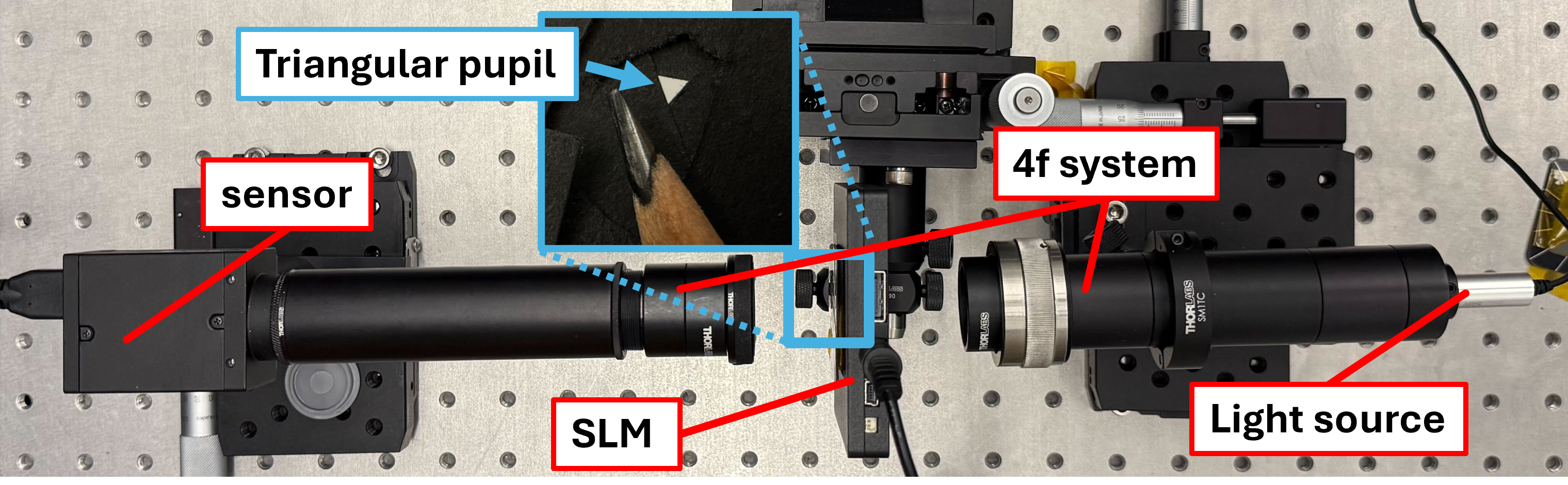} \\
    \includegraphics[width=0.3\linewidth]{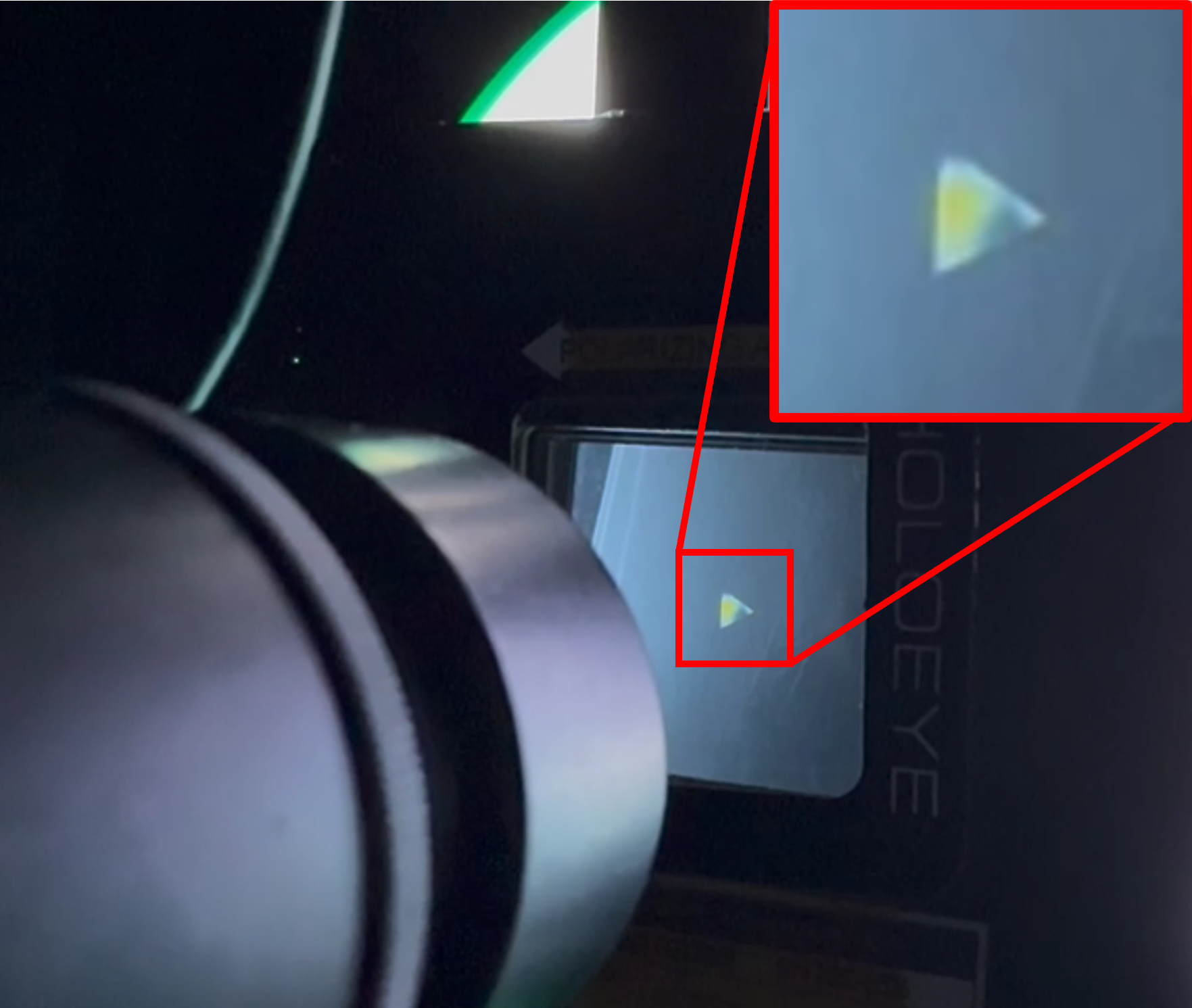}
    \vspace{-1ex}
    \caption{Real experiment setup. [Top] The 4$f$-system, SLM, and triangular pupil setup. [Bottom] Visualization of triangular pattern on SLM.}
    \label{fig: real-setup}
\end{figure}

\begin{figure}
    \centering
    \begin{tabular}{ccc}
    \includegraphics[width=0.28\linewidth]{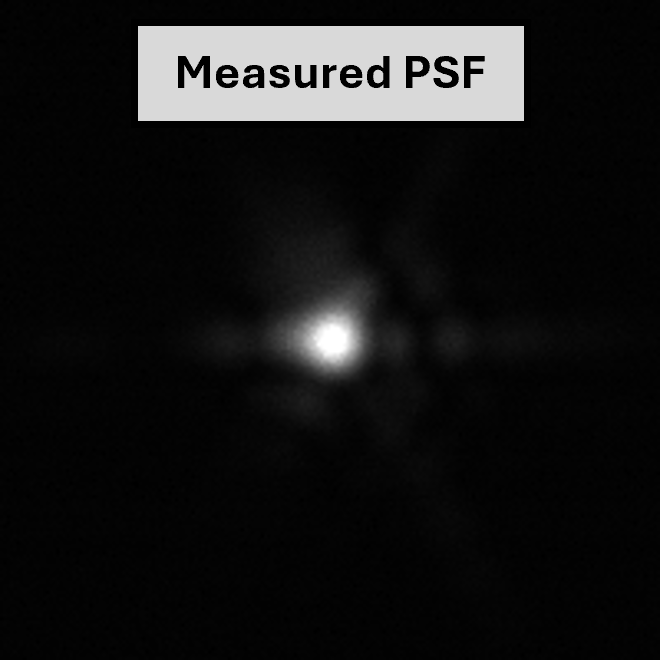} & 
    \includegraphics[width=0.28\linewidth]{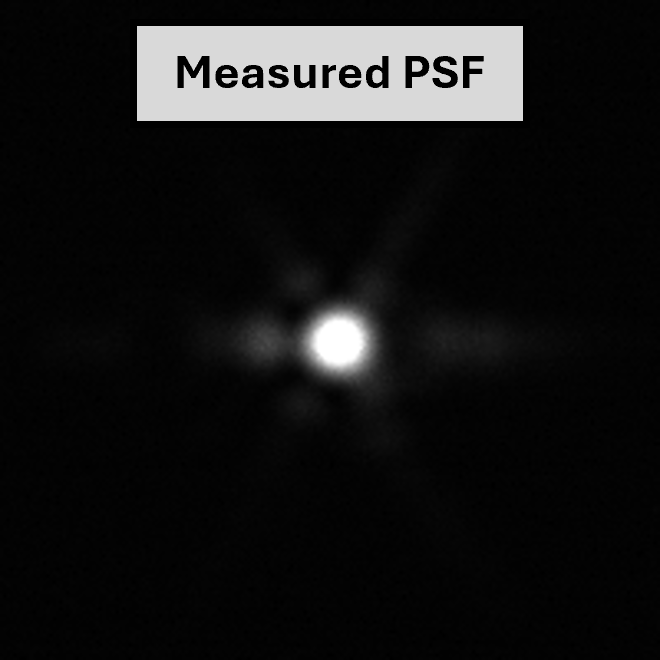} & 
    \includegraphics[width=0.28\linewidth]{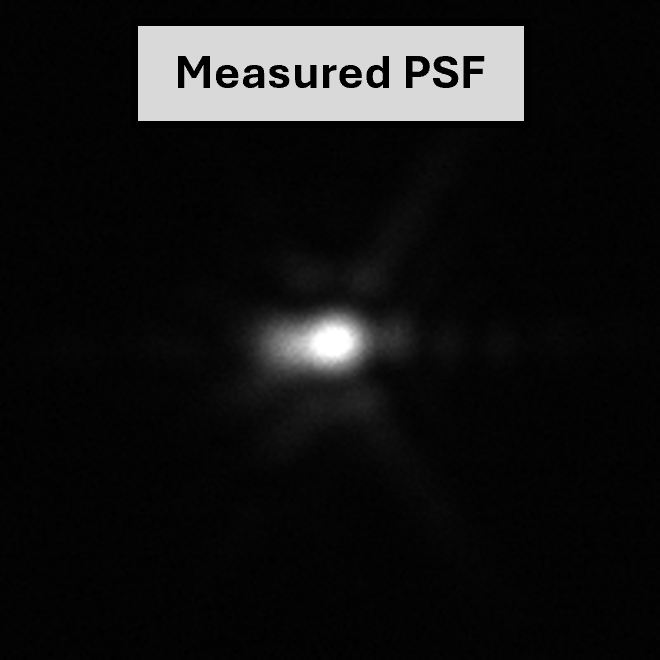} \\
    
    \begin{tabular}{@{}c@{\hspace{1mm}}c@{}}
    \includegraphics[width=0.135\linewidth]{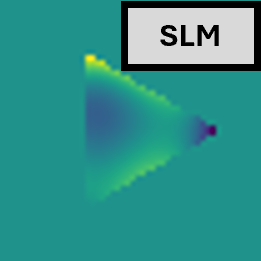} & 
    \includegraphics[width=0.135\linewidth]{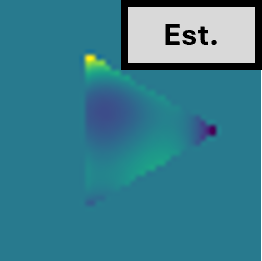}
    \end{tabular} & 
    \begin{tabular}{@{}c@{\hspace{1mm}}c@{}}
    \includegraphics[width=0.135\linewidth]{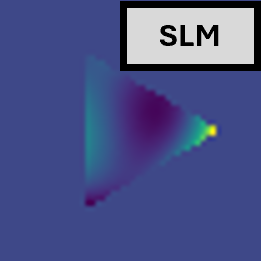} & 
    \includegraphics[width=0.135\linewidth]{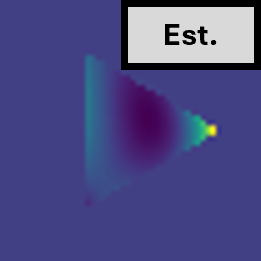}
    \end{tabular} & 
    \begin{tabular}{@{}c@{\hspace{1mm}}c@{}}
    \includegraphics[width=0.135\linewidth]{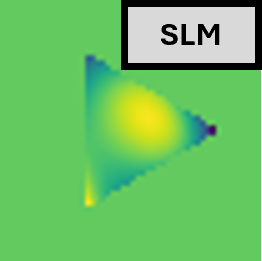} & 
    \includegraphics[width=0.135\linewidth]{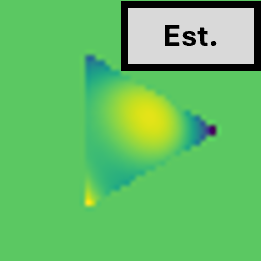}
    \end{tabular} \\
    
    \includegraphics[width=0.28\linewidth]{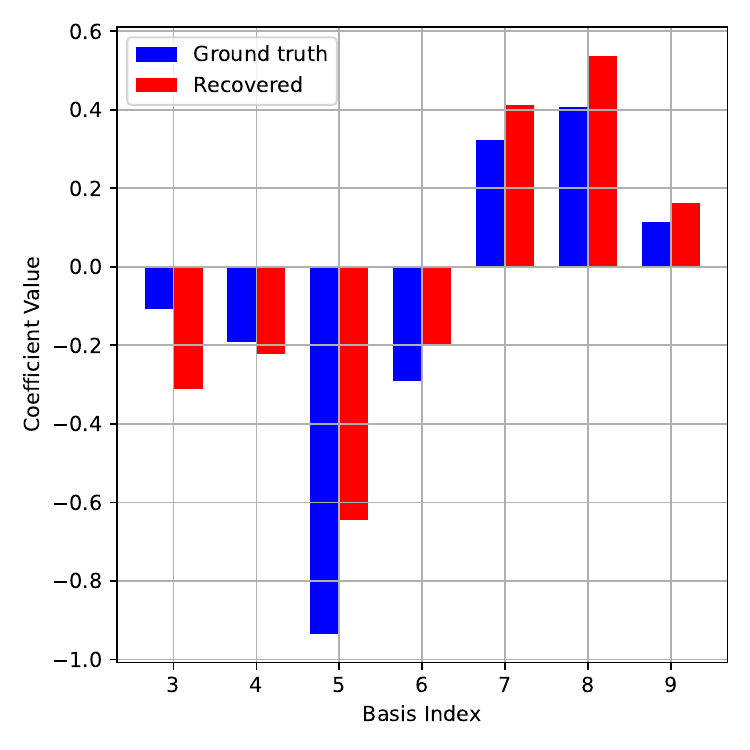} & 
    \includegraphics[width=0.28\linewidth]{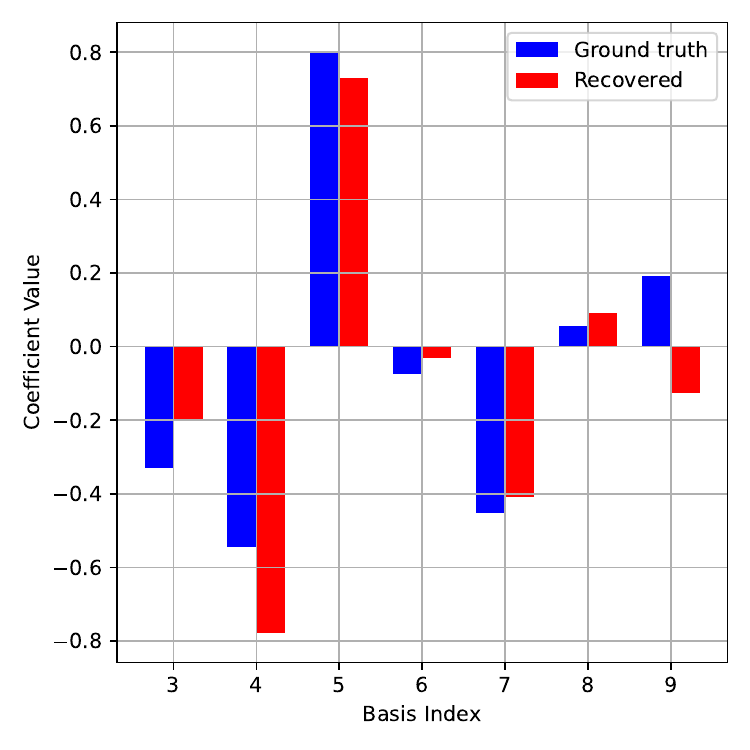} & 
    \includegraphics[width=0.28\linewidth]{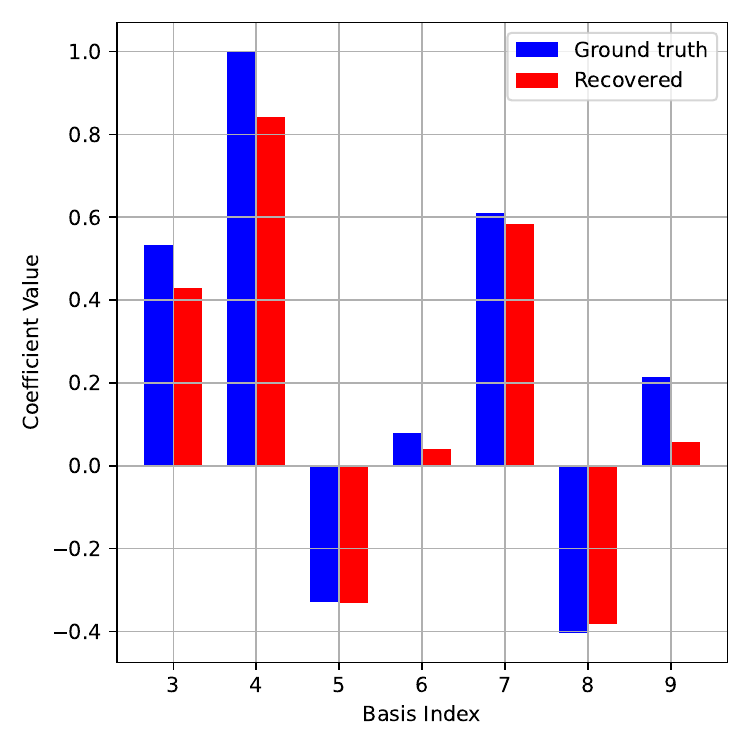} \\
    
    (a) & (b) & (c)
    \end{tabular}
    \caption{Evaluation on \textit{real} validation data. [Top] Measured PSFs. [Middle] SLM phase pattern (left) and estimated phase (right) for each case. [Bottom] Basis coefficient comparison (blue as ground truth, red as recovered). All cases shown (a-c) are validation data. Here we use the MLP-1 method, hence runtime is $\sim3700$ Hz.}
    \label{fig: real-results}
\end{figure}

\section{Conclusion}
In this work we studied the problem of wavefront estimation. We showed that trivial ambiguities play an important role in determining the uniqueness of the solution. We identified a set of conditions under which we can guarantee a unique recovery of the phase. We proposed several algorithms inspired by previous approaches to recover phases from their PSFs in real time and demonstrated our methodology in a real system using an SLM.

With this new approach, we envision that it shall play a role in achieving fast and accurate wavefront estimation in a passive setting, opening the door to a new generation of medical and defense applications. In future works, we look towards answering questions related to robustness and performance bounds as well as optimal pupil geometry and algorithm co-design.

\section*{Acknowledgments}
The authors thank Harshana Weligampola for the helpful discussions and assistance in the optical bench setup.

{\appendices
\section*{Proof of Theorem 2}
\begin{proof}
     Using the Cauchy-Schwarz inequality we may write
     \begin{align*}
         \abs{\vx^H \widehat{\vx}}^2 &\leq (\vx^H \vx) (\widehat{\vx}^H \widehat{\vx}).
     \end{align*}
     With the Cauchy-Schwarz inequality, equality is achieved if and only if $\widehat{\vx}$ is a complex, scalar multiple of $\vx$. This implies
     \begin{equation*}
         \widehat{\vx} = \nu \vx,
     \end{equation*}
     where $\nu \in \C$. Furthermore, since $\widehat{\vx} \in \Omega(\vx)$, $\nu = e^{j c}$, where $c \in [0, 2\pi)$ can be shown from noting the periodicity of the complex exponential. Hence the Theorem follows.
\end{proof}

\section*{Proof of Theorem 3}
\subsection*{Proofs of Lemmas}
We now present three Lemmas that help to remove certain sets of trivial ambiguities. Recall that we have assumed that $\vp$ is known exactly, hence we shall heavily rely upon the knowledge of $\vp$. Specifically, by definition of the support and since it is known \emph{a priori}, we can require that
\begin{equation}
    \vp \odot \widehat{\vx} = \widehat{\vx}
\end{equation}
for any candidate $\widehat{\vx} \in \Omega(\vx)$. This will help to eliminate most of the ambiguities.

Lemma 1 is aimed at removing $\calT_{\text{shift}}$ for $\vell \neq \vzero$.

\begin{lemma}[Aperiodic support]
    Let $\vp$ is known and $\abs{x[\vn]} > 0 \; \forall \vn \in \Z_N$. If $\nexists \vell \in \Z_N$ with $\vell \neq \vzero$ such that $p[\vn - \vell] = p[\vn] \; \forall \vn \in \Z_N$, then $\vp \odot \calT_{\text{shift}}(\vx) \notin \Omega(\vx) \; \forall \vell \neq \vzero \in \Z_N$.
\end{lemma}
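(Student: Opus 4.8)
The plan is to prove this by a single invariant: the number of nonzero entries of a signal. First I would record the elementary fact that \emph{every} element of $\Omega(\vx)$ has exactly $\#\{\vn \in \Z_N : p[\vn] = 1\}$ nonzero entries. Indeed, by the definition of the pupil the support of $\vx$ is precisely $\mathrm{supp}(\vp) := \{\vn : p[\vn]=1\}$ (here I invoke the standing hypothesis that $|x[\vn]| > 0$ throughout the pupil, i.e. there are no accidental interior zeros), and each generator of $\Omega(\vx)$ acts on $x[\vn]$ by composing an index map — $\vn \mapsto \vn - \vell$ for $\calT_{\text{shift}}$ or $\vn \mapsto -\vn$ for $\calT_{\text{flip}}$, both bijections of $\Z_N$ under the circular convention — with multiplication by a unit-modulus scalar ($\calT_{\text{dc}}$) and/or complex conjugation, none of which can create or destroy a zero. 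Since this "support cardinality equals $\#\{\vn: p[\vn]=1\}$" property is preserved under composition, it holds for all $\widehat{\vx} \in \Omega(\vx)$.

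Next I would compute the support of $\vp \odot \calT_{\text{shift}}(\vx)$. Because $(\calT_{\text{shift}}(\vx))[\vn] = x[\vn - \vell]$ is nonzero exactly on the circular translate $\mathrm{supp}(\vp) + \vell$, masking by $\vp$ produces a signal supported on $\mathrm{supp}(\vp) \cap (\mathrm{supp}(\vp) + \vell)$. The key step is to show that for $\vell \neq \vzero$ this intersection is a \emph{proper} subset of $\mathrm{supp}(\vp)$: if instead $\mathrm{supp}(\vp) \cap (\mathrm{supp}(\vp)+\vell) = \mathrm{supp}(\vp)$, then $\mathrm{supp}(\vp) + \vell \supseteq \mathrm{supp}(\vp)$, and since both sets are finite with the same cardinality this forces $\mathrm{supp}(\vp) + \vell = \mathrm{supp}(\vp)$, i.e. $p[\vn - \vell] = p[\vn]$ for all $\vn$, contradicting the aperiodicity hypothesis. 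Hence $\vp \odot \calT_{\text{shift}}(\vx)$ has strictly fewer than $\#\{\vn: p[\vn]=1\}$ nonzero entries.

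Finally I would combine the two observations: a signal with strictly fewer than $\#\{\vn: p[\vn]=1\}$ nonzero entries cannot equal any member of $\Omega(\vx)$, so $\vp \odot \calT_{\text{shift}}(\vx) \notin \Omega(\vx)$ for every $\vell \neq \vzero$, which is the claim (and the degenerate case of an empty intersection, giving the zero vector, is covered automatically since $\vx \neq \vzero$ implies $\vzero \notin \Omega(\vx)$). I do not expect a real obstacle here; the only points that need care are (a) being explicit that the translation is circular (mod $N$) so that it is a bijection and the cardinality bookkeeping is legitimate, and (b) using the convention that $x[\vn]$ is nonzero on all of $\mathrm{supp}(\vp)$, which is exactly what lets us identify $\mathrm{supp}(\vx)$ with $\mathrm{supp}(\vp)$ and run the counting argument without worrying about cancellations.
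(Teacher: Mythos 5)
Your proof is correct and follows essentially the same route as the paper's: both reduce the claim to the observation that masking a nontrivially shifted signal by $\vp$ restricts its support to $\mathrm{supp}(\vp)\cap(\mathrm{supp}(\vp)+\vell)$, which can only recover a member of $\Omega(\vx)$ if $\mathrm{supp}(\vp)+\vell=\mathrm{supp}(\vp)$, contradicting aperiodicity. The only difference is presentational --- the paper phrases it as a pointwise contradiction via $p[\vn]p[\vn-\vell]=1$, while you make explicit the support-cardinality invariant of $\Omega(\vx)$ and the bijectivity of circular translation, which is a slightly more careful rendering of the same idea.
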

\begin{proof}
    Suppose $\vp$ is known and $\nexists\vell \in \Z_N$ for $\vell \neq \vzero$ such that $p[\vn - \vell] = p[\vn]$ and $\calT_{\text{shift}}(\vx) \in \Omega(\vx)$. Since $x[\vn - \vell]$ is an ambiguity by translation, its support too must be a translation. Hence, by definition of $\vp$,
    \begin{equation*}
        p[\vn - \vell] x[\vn - \vell] = x[\vn - \vell], \;\; \forall \vn \in \Z_N.
    \end{equation*}
    Since we know $\vp$, the solution must also satisfy
    \begin{equation*}
        p[\vn] p[\vn - \vell] x[\vn - \vell] = x[\vn - \vell], \;\; \forall \vn \in \Z_N.
    \end{equation*}
    However, this would imply $p[\vn] p[\vn - \vell] = 1 \; \forall \vn \in \Z_N$ for some $\vell \neq \vzero$ since $\abs{x[\vn]} > 0 \; \forall \vn \in \Z_N$ which contradicts our assumption. By contradiction, the Lemma follows.
\end{proof}

Lemma 2 is aimed at eliminating $\calT_{\text{shift}} \circ \calT_{\text{flip}}$. Note that this also eliminates $\calT_{\text{flip}}$ if we consider a shift of $\vell = \vzero$.
\begin{lemma}[Asymmetric and aperiodic support]
    Let $\vp$ be known and $\abs{x[\vn]} > 0 \; \forall \vn \in \Z_N$. Then $\exists \vell \in \Z_N$ such that $\vp \odot (\calT_{\text{shift}} \circ \calT_{\text{flip}})(\vx) \in \Omega(\vx)$ if and only if $(\calT_{\text{shift}} \circ \calT_{\text{flip}})(\vp) = \vp$ for the same $\vell$.
\end{lemma}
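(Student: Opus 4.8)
The plan is to prove the two implications of the biconditional separately, leaning throughout on one structural fact: because circular translation and the conjugate flip are bijections of the index grid $\Z_N$, any $\vu \in \Omega(\vx)$ has $\mathrm{supp}(\vu)$ equal to a circular shift, or a circularly shifted flip, of $\mathrm{supp}(\vp)$, so in particular $\abs{\mathrm{supp}(\vu)} = S \bydef \abs{\mathrm{supp}(\vp)}$ (the $\calT_{\text{dc}}$ factors never touch the support, $\calT_{\text{shift}}$ permutes it, $\calT_{\text{flip}}$ reverses it). Since the statement already masks the flip-shifted signal by $\vp$, I would work directly with $\vw \bydef (\calT_{\text{shift}} \circ \calT_{\text{flip}})(\vx)$, noting $w[\vn] = x^*[\vell - \vn]$ and hence, using that $\vp$ is the \emph{exact} support indicator, $\abs{w[\vn]} = p[\vell-\vn]$; equivalently $\mathrm{supp}(\vw) = (\calT_{\text{shift}} \circ \calT_{\text{flip}})(\vp) =: \tilde\vp$.

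For the ($\Leftarrow$) direction, assume $(\calT_{\text{shift}} \circ \calT_{\text{flip}})(\vp) = \vp$ for some $\vell$. Then $\tilde\vp = \vp$, so $\vw$ has support equal to that of $\vp$, and therefore $\vp \odot \vw = \vw$. But $\vw$ is built from $\vx$ by composing the generators $\calT_{\text{flip}}$ and $\calT_{\text{shift}}$, hence $\vw \in \Omega(\vx)$ by definition of the trivial ambiguity set, so $\vp \odot (\calT_{\text{shift}} \circ \calT_{\text{flip}})(\vx) = \vw \in \Omega(\vx)$ for that same $\vell$. This direction needs neither a cardinality count nor any use of the phase of $\vx$.

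For the ($\Rightarrow$) direction, assume $\vp \odot \vw \in \Omega(\vx)$ for some $\vell$. Then $\mathrm{supp}(\vp \odot \vw) = \mathrm{supp}(\vp) \cap \mathrm{supp}(\tilde\vp)$, while the structural fact forces this set to have cardinality exactly $S$. Since $\abs{\mathrm{supp}(\vp)} = \abs{\mathrm{supp}(\tilde\vp)} = S$ and the intersection sits inside each, size $S$ is attainable only when $\mathrm{supp}(\vp) = \mathrm{supp}(\tilde\vp)$, i.e. $\vp = \tilde\vp = (\calT_{\text{shift}} \circ \calT_{\text{flip}})(\vp)$ for the same $\vell$. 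The step I expect to require the most care — the ``main obstacle'' in an otherwise short argument — is ruling out that a member of $\Omega(\vx)$ coinciding with the masked signal $\vp \odot \vw$ might actually be, say, a pure translate $e^{jc}\,\calT_{\text{shift}}^{\vell'}(\vx)$ whose support happens to fall inside that of $\vp$ for a reason unrelated to the flip; the cardinality bookkeeping closes exactly this loophole, since any such coincidence still forces $\abs{\mathrm{supp}(\vp) \cap \mathrm{supp}(\tilde\vp)} = S$ and hence $\vp = \tilde\vp$. I would close by noting that the $\vell = \vzero$ case specializes the statement to the centrosymmetry test $\calT_{\text{flip}}(\vp) = \vp$ for $\calT_{\text{flip}}$ alone, which is the flip condition (iii) appearing in Theorem~\ref{th: theorem_iPSF}.
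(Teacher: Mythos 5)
Your proof is correct and follows essentially the same route as the paper: the ($\Leftarrow$) direction is identical, and your ($\Rightarrow$) direction reaches the paper's conclusion $p[\vn] = p[-\vn - \vell]$ by counting support cardinalities rather than via the paper's pointwise identity $p[\vn]\,p[-\vn-\vell] = 1$. If anything, your cardinality bookkeeping is the more careful phrasing, since that pointwise identity is only forced where the signal is nonzero and the paper implicitly relies on the same equal-cardinality step to conclude that the two supports coincide.
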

\begin{proof}
    ($\implies$). Suppose $\exists \vell \in \Z_N$ such that $(\calT_{\text{shift}} \circ \calT_{\text{flip}})(\vx) \in \Omega(\vx)$. Then
    \begin{equation*}
        p[-\vn - \vell] x^*[-\vn - \vell] = x^*[-\vn - \vell].
    \end{equation*}
    Since $\vp$ is known, this also implies
    \begin{equation*}
        p[\vn] p[-\vn - \vell] x^*[-\vn - \vell] = x^*[-\vn - \vell]
    \end{equation*}
    which further implies $p[\vn] p[-\vn - \vell] = 1 \; \forall \vn \in \Z_N$ since $\abs{x[\vn]} > 0 \; \forall \vn \in \Z_N$. By definition of $\vp$, this implies $p[\vn] = p[-\vn - \vell]$ for all $\vn \in \Z_N$.

    ($\impliedby$). Suppose $\exists \vell \in \Z_N$ such that $(\calT_{\text{shift}} \circ \calT_{\text{flip}})(\vp) = \vp$. Consider a transformation of the true solution $\vx$: $(\calT_{\text{shift}} \circ \calT_{\text{flip}})(\vx)$. Then the following holds $\forall \vn \in \Z_N$:
    \begin{align*}
        p[-\vn - \vell] x^*[-\vn - \vell] &= x^*[-\vn - \vell] \\
        \implies p[\vn] x^*[-\vn - \vell] &= x^*[-\vn - \vell].
    \end{align*}
    Hence, $\vp \odot (\calT_{\text{shift}} \circ \calT_{\text{flip}})(\vx) \in \Omega(\vx)$.

    ($\iff$). Follows from both directions.
\end{proof}

The final Lemma is aimed at eliminating $\calT_{\text{dc}}$.
\begin{lemma}[Known complex DC shift]
    Let $\abs{x[\vn]} > 0 \; \forall \vn \in \Z_N$. If $\vphi^T \vp / \norm{\vp} = c_{0}$ with $c_{0} \in [0, 2\pi)$ $\implies$ $\nexists$ $c' \in (0, 2\pi)$ such that $\vx = e^{jc'} \vx$.
\end{lemma}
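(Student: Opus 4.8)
The statement as printed, $\vx = e^{jc'}\vx$, forces $e^{jc'}=1$ on the support and hence $c'=0$ regardless of any hypothesis, so it uses none of the assumptions; read in context, Lemma~3 must be the companion of Lemmas~1 and~2 that removes the leftover $\calT_{\text{dc}}$ ambiguity. The substantive claim I would prove is therefore: once the weighted average phase $c_0 = \vphi^{T}\vp/\norm{\vp}$ over the support is fixed and known (condition~(iv)), no nonzero offset $c'\in(0,2\pi)$ yields a candidate $\widehat{\vx}=e^{jc'}\vx$ consistent with that known value --- equivalently, the only admissible element of $\Omega(\vx)$ produced by a pure DC shift is $\vx$ itself.

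The plan is a single linear-functional computation. First I would restrict to the support $S=\{\vn : p[\vn]=1\}$, which is nonempty since $\abs{x[\vn]}>0$, and on which both $\vx$ and any $\widehat{\vx}=e^{jc'}\vx$ are nonzero; there $\widehat{x}[\vn]=\abs{x[\vn]}e^{j(\phi[\vn]+c')}$, so the candidate phase is $\widehat{\vphi}=\vphi+c'\vone$ on $S$, using the canonical real-valued lift fixed by the smoothness and zero-mean unwrapping conventions introduced earlier. Next I would apply the same functional $\vu\mapsto\vu^{T}\vp/\norm{\vp}$ to $\widehat{\vphi}$; since $p$ is binary, $\vp^{T}\vone=\sum_{\vn}p[\vn]=\norm{\vp}^{2}$, hence
\begin{equation*}
\widehat{\vphi}^{T}\vp/\norm{\vp} \;=\; \vphi^{T}\vp/\norm{\vp} + c'\,\vone^{T}\vp/\norm{\vp} \;=\; c_{0} + c'\norm{\vp}.
\end{equation*}
Finally, imposing consistency with the known average gives $c_{0}+c'\norm{\vp}=c_{0}$, so $c'\norm{\vp}=0$; since $\norm{\vp}>0$ this forces $c'=0$, contradicting $c'\in(0,2\pi)$, and the lemma follows.

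I expect the only delicate step to be the bookkeeping of $2\pi$-periodicity: ``add $c'$ to every phase value'' and ``the average is known'' must both be read with respect to one fixed real-valued representative of the wrapped phase rather than modulo $2\pi$, since otherwise an offset with $c'\norm{\vp}\in 2\pi\Z$ could leave a wrapped average untouched. I would handle this by explicitly invoking the standing conventions already in force (neighbouring phase increments bounded by $\pm\pi$ and zero mean over the support), which pin down a unique representative and make $\vphi\mapsto\vphi^{T}\vp/\norm{\vp}$ well defined, after which the displayed computation is unambiguous and everything else is routine. Combined with Lemmas~1 and~2 (removing $\calT_{\text{shift}}$ with $\vell\neq\vzero$ and $\calT_{\text{shift}}\circ\calT_{\text{flip}}$) and Hayes' Theorem (removing non-trivial ambiguities a.s.), this collapses the admissible ambiguity set to $\{\vx\}$ a.s., completing the proof of Theorem~\ref{th: theorem_iPSF}.
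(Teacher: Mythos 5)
Your proposal is correct, and it is worth noting that you have actually proved something \emph{stronger and more faithful to the lemma's intended role} than the paper's own proof does. The paper's argument is essentially the literal reading you dismiss in your opening sentence: it decomposes $\phi[\vn]=c_0+\tilde{\phi}[\vn]$, assumes $\vx=e^{jc'}\vx$, divides through using $\abs{x[\vn]}>0$, and concludes $e^{jc'}=1$ is impossible for $c'\in(0,2\pi)$. The decomposition involving $c_0$ plays no essential role in that contradiction, so the known-average hypothesis is never genuinely invoked --- exactly the defect you identified. Your version instead proves the statement that condition~(iv) of Theorem~\ref{th: theorem_iPSF} actually needs: a candidate $\widehat{\vx}=e^{jc'}\vx$ with $c'\neq 0$ shifts the linear functional $\vphi\mapsto\vphi^{T}\vp/\norm{\vp}$ by $c'\norm{\vp}\neq 0$ (using $\vone^{T}\vp=\norm{\vp}^{2}$ for a binary mask), and is therefore inconsistent with the known value $c_0$. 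Your flagging of the $2\pi$-wrapping subtlety --- that the average must be taken on a fixed real-valued lift pinned down by the smoothness and zero-mean conventions, lest an offset with $c'\norm{\vp}\in 2\pi\Z$ slip through --- is a genuine point the paper does not address. In short: both arguments reach the stated conclusion, but yours is the one that makes Lemma~3 do the work attributed to it in the proof of Theorem~\ref{th: theorem_iPSF}.
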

\begin{proof}
    We first expand
    \begin{equation*}
        \phi[\vn] = c_{0} + \Tilde{\phi}[\vn], \; \forall \vn \in \Z_N
    \end{equation*}
    and let $\vp^T \vphi  / \norm{\vp} = c_{0}$ and $\vp^T \Tilde{\vphi} / \norm{\vp} = 0$. Then $\forall \vn \in \Z_N$,
    \begin{align*}
        x[\vn] &= e^{j c_{0}} \abs{x[\vn]} e^{j \Tilde{\phi}[\vn]}.
    \end{align*}
    Now suppose $\exists$ $c' \in (0, 2\pi)$ such that $\vx = e^{jc'} \vx$. Since $\vert\vx\vert > \vzero$, this implies $e^{j(c_{0} + c')} = e^{j c_{0}}$ which further implies
    \begin{equation*}
        e^{jc'} = 1, \; c' \in (0, 2\pi).
    \end{equation*}
    This can not be true, hence the Lemma is shown.
\end{proof}

\subsection*{Proof of Theorem 3}

\begin{proof}
Define $\vy \in \R^{K}$ as vector according to measurement model \eqref{eq: measurement_model} and further suppose that there exists no non-trivial ambiguity, i.e., $\nexists \widehat{\vx} \notin \Omega(\vx)$ such that $\vy = \abs{\mF \widehat{\vx}}^2$. Further suppose we meet the conditions specified of Theorem~\ref{th: theorem_iPSF}. Note that we have not yet asserted Hayes' Theorem.

Recall that the set of trivial ambiguities relative to true solution $\vx$ is as follows:
\begin{equation*}
    \Omega(\vx) = \langle \{ \calT_{\text{dc}}(\vx), \; \calT_{\text{shift}}(\vx), \; \calT_{\text{flip}}(\vx) \} \rangle.
\end{equation*}
By applying Lemma 3, which provides uniqueness relative to $\calT_{\text{dc}}$ applied through condition (iii), $\Omega(\vx)$ can be shown to reduce to
\begin{align*}
    \Omega(\vx) &= \langle \{ \calT_{\text{shift}}(\vx), \; \calT_{\text{flip}}(\vx) \} \rangle \\
    &= \{ \calT_{\text{shift}}(\vx), \; \calT_{\text{flip}}(\vx), \; (\calT_{\text{shift}} \circ \calT_{\text{flip}})(\vx) \}.
\end{align*}
By Lemma 2, which eliminates ambiguities related to the flipped solution facilitated by conditions (i) and (ii), this further can be shown to reduce to
\begin{equation*}
    \Omega(\vx) = \{ \calT_{\text{shift}}(\vx) \}.
\end{equation*}
By Lemma 1, related to the shift ambiguity and applied via condition (i) and (ii), $\calT_{\text{shift}}$ is eliminated for all $\vell \in \Z_N$ such that $\vell \neq \vzero$. Therefore, the application of Lemma 1 leads to
\begin{equation*}
    \Omega(\vx) = \{ \vx \},
\end{equation*}
i.e., the only remaining solution is indeed $\vx$.

Hence, if no non-trivial ambiguities exist, under conditions stated in Theorem~\ref{th: theorem_iPSF} the solution is unique. To arrive at Theorem~\ref{th: theorem_iPSF}, one need only require $\vy$ to satisfy Hayes' Theorem (and accordingly a non-trivial ambiguity occurs with probability 0), and the Theorem follows.
\end{proof}
}

\section*{Proof of Theorem 4}
\begin{proof}
    Let $\psi[\vn]$ be known over $c[\vn]$ and $y[\vn]$ satisfy Hayes' Theorem, hence, all ambiguous solutions (with probability 1) belong to the set of trivial ambiguities $\Omega(\vx)$ where $x[\vn] = p[\vn] e^{j \psi[\vn]}$ is the true solution.
    
    Suppose there exists another support $\widehat{\vp} \neq \vp$ such that $\widehat{p}[\vn] e^{j \psi[\vn]} \in \Omega(\vx)$. It can be observed by the definition of $\Omega(\vx)$ that the magnitudes of all solutions that belong to $\Omega(\vx)$ are of the form of translations and flips of $\vp$. Therefore, $\widehat{p}[\vn] = p[\pm \vn + \vell]$ for some $\vell \in \Z_N$. Since $\vpsi$ is also known, this implies $p[\pm \vn + \vell] e^{j \psi[\vn]} \in \Omega(\vx)$.

    One can then observe that there does exist $\vell \in \Z_N$ such that $\widehat{\vp} \neq \vp$ and $p[\pm \vn + \vell] e^{j \psi[\vn]} \in \Omega(\vx)$. This represents the contrapositive case, thus the Theorem follows.
\end{proof}

\bibliographystyle{IEEEbib_abv}
\bibliography{refs}

\begin{thebibliography}{10}

\bibitem{Tyson_Frazier_Book}
R.~K. Tyson and B.~W. Frazier,
\newblock {\em Principles of {Adaptive} {Optics}},
\newblock CRC Press, 2022,
\newblock \href{https://www.taylorfrancis.com/books/mono/10.1201/9781003140191/principles-adaptive-optics-robert-tyson-benjamin-west-frazier}{Taylor-Francis URL}.

\bibitem{sayre_1952_a}
D. Sayre,
\newblock ``{Some implications of a theorem due to Shannon},''
\newblock {\em Acta Crystallographica}, vol. 5, no. 6, pp. 843, Nov 1952,
\newblock \href{https://journals.iucr.org/paper?S0365110X52002276}{IUCr URL}.

\bibitem{Gerchberg_1972}
R.~W. Gerchberg and W.~O. Saxton,
\newblock ``A practical algorithm for the determination of phase from image and diffraction plane pictures,''
\newblock {\em Optik}, vol. 35, pp. 237--246, 1972.

\bibitem{Fienup_1982_Comparison}
J.~R. Fienup,
\newblock ``Phase retrieval algorithms: a comparison,''
\newblock {\em Applied Optics}, vol. 21, no. 15, pp. 2758--2769, 1982,
\newblock \href{https://doi.org/10.1364/AO.21.002758}{Optica URL}.

\bibitem{Eldar_2016_Hassibi_overview}
K. Jaganathan, Y.~C. Eldar, and B. Hassibi,
\newblock ``Phase retrieval: An overview of recent developments,''
\newblock {\em Optical Compressive Imaging}, pp. 279--312, 2016,
\newblock \href{https://www.taylorfrancis.com/chapters/edit/10.1201/9781315371474-26/phase-retrieval-overview-recent-developments-kishore-jaganathan-yonina-eldar-babak-hassibi}{Taylor-Francis URL}.

\bibitem{bendory_2017_a}
T. Bendory, R. Beinert, and Y.~C. Eldar,
\newblock ``Fourier phase retrieval: Uniqueness and algorithms,''
\newblock in {\em Compressed Sensing and its Applications: Second International MATHEON Conference 2015}. Springer, 2017, pp. 55--91,
\newblock \href{https://link.springer.com/chapter/10.1007/978-3-319-69802-1_2}{Springer URL}.

\bibitem{Unser_2023_Review}
J. Dong, L. Valzania, A. Maillard, T.-a. Pham, S. Gigan, and M. Unser,
\newblock ``Phase retrieval: From computational imaging to machine learning: A tutorial,''
\newblock {\em IEEE Signal Processing Magazine}, vol. 40, no. 1, pp. 45--57, 2023,
\newblock \href{https://ieeexplore.ieee.org/document/10004797}{IEEE Xplore}.

\bibitem{Ranieri_2013}
J. Ranieri, A. Chebira, Y.~M. Lu, and M. Vetterli,
\newblock ``Phase retrieval for sparse signals: Uniqueness conditions,'' 2013,
\newblock \href{https://arxiv.org/abs/1308.3058}{arXiv:1308.3058}.

\bibitem{Eldar_2014_STFT}
Y.~C. Eldar, P. Sidorenko, D.~G. Mixon, S. Barel, and O. Cohen,
\newblock ``Sparse phase retrieval from short-time {F}ourier measurements,''
\newblock {\em IEEE Signal Processing Letters}, vol. 22, no. 5, pp. 638--642, 2014,
\newblock \href{https://ieeexplore.ieee.org/document/6932437}{IEEE Xplore}.

\bibitem{gonsalves_1982_a}
R.~A. Gonsalves,
\newblock ``Phase retrieval and diversity in adaptive optics,''
\newblock {\em Optical Engineering}, vol. 21, no. 5, pp. 829--832, 1982,
\newblock \href{https://www.spiedigitallibrary.org/journals/optical-engineering/volume-21/issue-5/215829/Phase-Retrieval-And-Diversity-In-Adaptive-Optics/10.1117/12.7972989.short}{SPIE URL}.

\bibitem{Candes_2015_Coded}
E.~J. Candès, X. Li, and M. Soltanolkotabi,
\newblock ``Phase retrieval from coded diffraction patterns,''
\newblock {\em Applied and Computational Harmonic Analysis}, vol. 39, no. 2, pp. 277--299, 2015,
\newblock \href{https://www.sciencedirect.com/science/article/pii/S1063520314001201}{ScienceDirect URL}.

\bibitem{Ragazzoni_1996}
R. Ragazzoni,
\newblock ``Pupil plane wavefront sensing with an oscillating prism,''
\newblock {\em Journal of Modern Optics}, vol. 43, no. 2, pp. 289--293, 1996,
\newblock \href{https://www.tandfonline.com/doi/abs/10.1080/09500349608232742}{Taylor-Francis URL}.

\bibitem{Hariharan_1987_PhaseShift}
P. Hariharan, B.~F. Oreb, and T. Eiju,
\newblock ``Digital phase-shifting interferometry: a simple error-compensating phase calculation algorithm,''
\newblock {\em Applied Optics}, vol. 26, no. 13, pp. 2504--2506, Jul 1987,
\newblock \href{https://opg.optica.org/ao/fulltext.cfm?uri=ao-26-13-2504&id=168363}{Optica URL}.

\bibitem{Colavita_1994_Interferometry}
M.~M. Colavita,
\newblock ``Limitations to optical/{IR} interferometry from the ground and space,''
\newblock in {\em Very High Angular Resolution Imaging}. 1994, pp. 469--476, Springer Netherlands,
\newblock \href{https://link.springer.com/chapter/10.1007/978-94-011-0880-5_94}{Springer URL}.

\bibitem{Wyant_2003_Interferometry}
J.~C. Wyant,
\newblock ``Dynamic interferometry,''
\newblock {\em Optics \& Photonics News}, vol. 14, no. 4, pp. 36--41, Apr 2003,
\newblock \href{https://opg.optica.org/opn/viewmedia.cfm?uri=opn-14-4-36}{Optica URL}.

\bibitem{Feng_2023_NeuWS}
B.~Y. Feng, H. Guo, M. Xie, V. Boominathan, M.~K. Sharma, A. Veeraraghavan, and C.~A. Metzler,
\newblock ``{NeuWS}: Neural wavefront shaping for guidestar-free imaging through static and dynamic scattering media,''
\newblock {\em Science Advances}, vol. 9, no. 26, pp. eadg4671, 2023,
\newblock \href{https://www.science.org/doi/10.1126/sciadv.adg4671}{Science URL}.

\bibitem{Hayes_1982_Multidimensional}
M.~H. Hayes,
\newblock ``The reconstruction of a multidimensional sequence from the phase or magnitude of its {Fourier} transform,''
\newblock {\em IEEE Transactions on Acoustics, Speech, and Signal Processing}, vol. 30, no. 2, pp. 140--154, 1982,
\newblock \href{https://ieeexplore.ieee.org/document/1163863}{IEEE Xplore}.

\bibitem{Fienup_1983_Reference}
J.~R. Fienup,
\newblock ``Reconstruction of objects having latent reference points,''
\newblock {\em Journal of the Optical Society of America}, vol. 73, no. 11, pp. 1421--1426, Nov 1983,
\newblock \href{https://doi.org/10.1364/JOSA.73.001421}{Optica URL}.

\bibitem{Fienup_1983_Uniqueness}
T.~R. Crimmins and J.~R. Fienup,
\newblock ``Uniqueness of phase retrieval for functions with sufficiently disconnected support,''
\newblock {\em Journal of the Optical Society of America}, vol. 73, no. 2, Feb 1983,
\newblock \href{https://doi.org/10.1364/JOSA.73.000218}{Optica URL}.

\bibitem{Bandeira_2014_SavingPhase}
A.~S. Bandeira, J. Cahill, D.~G. Mixon, and A.~A. Nelson,
\newblock ``Saving phase: Injectivity and stability for phase retrieval,''
\newblock {\em Applied and Computational Harmonic Analysis}, vol. 37, no. 1, pp. 106--125, 2014,
\newblock \href{https://www.sciencedirect.com/science/article/pii/S1063520313000936}{ScienceDirect URL}.

\bibitem{Conca_2015_Injectivity}
A. Conca, D. Edidin, M. Hering, and C. Vinzant,
\newblock ``An algebraic characterization of injectivity in phase retrieval,''
\newblock {\em Applied and Computational Harmonic Analysis}, vol. 38, no. 2, pp. 346--356, 2015,
\newblock \href{https://www.sciencedirect.com/science/article/pii/S1063520314000876}{ScienceDirect URL}.

\bibitem{Vinzant_2015_Injectivity}
C. Vinzant,
\newblock ``A small frame and a certificate of its injectivity,''
\newblock in {\em 2015 International Conference on Sampling Theory and Applications (SampTA)}, 2015, pp. 197--200,
\newblock \href{https://ieeexplore.ieee.org/document/7148879}{IEEE Xplore}.

\bibitem{Hayes_1982_Reducible}
M.~H. Hayes and J.~H. McClellan,
\newblock ``Reducible polynomials in more than one variable,''
\newblock {\em Proceedings of the IEEE}, vol. 70, no. 2, pp. 197--198, 1982,
\newblock \href{https://ieeexplore.ieee.org/document/1456530}{IEEE Xplore}.

\bibitem{Fienup_1986_Stagnation}
J.~R. Fienup and C.~C. Wackerman,
\newblock ``Phase-retrieval stagnation problems and solutions,''
\newblock {\em Journal of the Optical Society of America A}, vol. 3, no. 11, pp. 1897--1907, 1986,
\newblock \href{https://doi.org/10.1364/JOSAA.3.001897}{Optica URL}.

\bibitem{Fienup_1987_Astronomy}
J.~R. Fienup and C. Dainty,
\newblock ``Phase retrieval and image reconstruction for astronomy,''
\newblock {\em Image Recovery: Theory and Application}, vol. 231, pp. 275, 1987,
\newblock \href{https://www.researchgate.net/profile/James-Fienup/publication/247171131_Phase_retrieval_and_image_reconstruction_for_astronomy/links/594e7f2faca27248ae38564d/Phase-retrieval-and-image-reconstruction-for-astronomy.pdf}{URL}.

\bibitem{Fienup_1990_Numerical}
J.~H. Seldin and J.~R. Fienup,
\newblock ``Numerical investigation of the uniqueness of phase retrieval,''
\newblock {\em Journal of the Optical Society of America A}, vol. 7, no. 3, pp. 412--427, 1990,
\newblock \href{https://doi.org/10.1364/SRS.1989.FA2}{Optica URL}.

\bibitem{Fienup_1992_Joint}
R.~G. Paxman, T.~J. Schulz, and J.~R. Fienup,
\newblock ``Joint estimation of object and aberrations by using phase diversity,''
\newblock {\em Journal of the Optical Society of America A}, vol. 9, no. 7, pp. 1072--1085, 1992,
\newblock \href{https://opg.optica.org/josaa/abstract.cfm?uri=josaa-9-7-1072}{Optica URL}.

\bibitem{Sayre_2002_Xray}
D. Sayre,
\newblock ``{X}-ray crystallography: the past and present of the phase problem,''
\newblock {\em Journal of Structural Chemistry}, vol. 13, pp. 81--96, 2002,
\newblock \href{https://link.springer.com/chapter/10.1007/978-3-319-19827-9_1}{Springer URL}.

\bibitem{Oszlanyi_2004_Abinitio}
G. Oszl{\'{a}}nyi and A. S{\"{u}}to,
\newblock ``{{\it Ab initio} structure solution by charge flipping},''
\newblock {\em Acta Crystallographica Section A}, vol. 60, no. 2, pp. 134--141, Mar 2004,
\newblock \href{https://scripts.iucr.org/cgi-bin/paper?buy=yes&cnor=we5001&showscheme=yes&sing=yes}{IUCr URL}.

\bibitem{Palatinus_2013_Flipping}
L. Palatinus,
\newblock ``{The charge-flipping algorithm in crystallography},''
\newblock {\em Acta Crystallographica Section B}, vol. 69, no. 1, pp. 1--16, Feb 2013,
\newblock \href{https://journals.iucr.org/b/issues/2013/01/00/sn5119/}{IUCr URL}.

\bibitem{Yeh_2015_Ptychography}
L.-H. Yeh, L. Tian, Z. Liu, M. Chen, J. Zhong, and L. Waller,
\newblock ``Experimental robustness of {Fourier} ptychographic phase retrieval algorithms,''
\newblock in {\em Imaging and Applied Optics 2015}. 2015, p. CW4E.2, Optica Publishing Group,
\newblock \href{https://opg.optica.org/oe/fulltext.cfm?uri=oe-23-26-33214&id=333747}{Optica URL}.

\bibitem{Candes_2015_Wirtinger}
E.~J. Candès, X. Li, and M. Soltanolkotabi,
\newblock ``Phase retrieval via wirtinger flow: Theory and algorithms,''
\newblock {\em IEEE Transactions on Information Theory}, vol. 61, no. 4, pp. 1985--2007, 2015,
\newblock \href{https://ieeexplore.ieee.org/document/7029630}{IEEE Xplore}.

\bibitem{Fienup_2008_Transverse}
M. Guizar-Sicairos and J.~R. Fienup,
\newblock ``Phase retrieval with transverse translation diversity: a nonlinear optimization approach,''
\newblock {\em Optics Express}, vol. 16, no. 10, pp. 7264--7278, May 2008,
\newblock \href{https://opg.optica.org/oe/fulltext.cfm?uri=oe-16-10-7264&id=158123}{Optica URL}.

\bibitem{Candes_2013_PhaseLift}
E.~J. Candès, T. Strohmer, and V. Voroninski,
\newblock ``{PhaseLift}: Exact and stable signal recovery from magnitude measurements via convex programming,''
\newblock {\em Communications on Pure and Applied Mathematics}, vol. 66, no. 8, pp. 1241--1274, 2013,
\newblock \href{https://onlinelibrary.wiley.com/doi/abs/10.1002/cpa.21432}{Wiley URL}.

\bibitem{Candes_2013_MatrixCompletion}
E.~J. Cand\`{e}s, Y.~C. Eldar, T. Strohmer, and V. Voroninski,
\newblock ``Phase retrieval via matrix completion,''
\newblock {\em SIAM Journal on Imaging Sciences}, vol. 6, no. 1, pp. 199--225, 2013,
\newblock \href{https://epubs.siam.org/doi/10.1137/151005099}{SIAM URL}.

\bibitem{Hassibi_2012_ISIT}
K. Jaganathan, S. Oymak, and B. Hassibi,
\newblock ``Recovery of sparse {1-D} signals from the magnitudes of their {Fourier} transform,''
\newblock in {\em 2012 IEEE International Symposium on Information Theory Proceedings}, 2012, pp. 1473--1477,
\newblock \href{https://ieeexplore.ieee.org/document/6283508}{IEEE Xplore}.

\bibitem{Goldstein_2018_PhaseMax}
T. Goldstein and C. Studer,
\newblock ``{PhaseMax}: Convex phase retrieval via basis pursuit,''
\newblock {\em IEEE Transactions on Information Theory}, vol. 64, no. 4, pp. 2675--2689, 2018,
\newblock \href{https://ieeexplore.ieee.org/document/8278279}{IEEE Xplore}.

\bibitem{Mallat_2015_MaxCut}
I. Waldspurger, A. d’Aspremont, and S. Mallat,
\newblock ``Phase recovery, {MaxCut} and complex semidefinite programming,''
\newblock {\em Mathematical Programming}, vol. 149, pp. 47--81, 2015,
\newblock \href{https://link.springer.com/article/10.1007/s10107-013-0738-9}{Springer URL}.

\bibitem{YueLu_2019_SpectralInit}
W. Luo, W. Alghamdi, and Y.~M. Lu,
\newblock ``Optimal spectral initialization for signal recovery with applications to phase retrieval,''
\newblock {\em IEEE Transactions on Signal Processing}, vol. 67, no. 9, pp. 2347--2356, 2019,
\newblock \href{https://ieeexplore.ieee.org/document/8666737}{IEEE Xplore}.

\bibitem{YueLu_2022_PMLR}
A. Maillard, F. Krzakala, Y.~M. Lu, and L. Zdeborova,
\newblock ``Construction of optimal spectral methods in phase retrieval,''
\newblock in {\em Proceedings of the 2nd Mathematical and Scientific Machine Learning Conference}. 16--19 Aug 2022, vol. 145 of {\em Proceedings of Machine Learning Research}, pp. 693--720, PMLR,
\newblock \href{https://proceedings.mlr.press/v145/maillard22a.html}{PMLR URL}.

\bibitem{Valzania_2021_SpectralInit}
L. Valzania, J. Dong, and S. Gigan,
\newblock ``Accelerating ptychographic reconstructions using spectral initializations,''
\newblock {\em Optics Letters}, vol. 46, no. 6, pp. 1357--1360, Mar 2021,
\newblock \href{https://opg.optica.org/ol/fulltext.cfm?uri=ol-46-6-1357&id=449033}{Optica URL}.

\bibitem{Schniter_2016_AMP}
P. Schniter, S. Rangan, and A.~K. Fletcher,
\newblock ``Vector approximate message passing for the generalized linear model,''
\newblock in {\em 2016 50th Asilomar Conference on Signals, Systems and Computers}, 2016, pp. 1525--1529,
\newblock \href{https://ieeexplore.ieee.org/document/7869633}{IEEE Xplore}.

\bibitem{Montanari_Book}
M. Mezard and A. Montanari,
\newblock {\em Information, Physics, and Computation},
\newblock Oxford University Press, 2009,
\newblock \href{https://academic.oup.com/book/6337}{Oxford Academic URL}.

\bibitem{Johnson_2008_Diffractive}
I. Johnson, K. Jefimovs, O. Bunk, C. David, M. Dierolf, J. Gray, D. Renker, and F. Pfeiffer,
\newblock ``Coherent diffractive imaging using phase front modifications,''
\newblock {\em Physical Review Letters}, vol. 100, pp. 155503, Apr 2008,
\newblock \href{https://journals.aps.org/prl/abstract/10.1103/PhysRevLett.100.155503}{APS URL}.

\bibitem{Levin_2007_CodedAperture}
A. Levin, R. Fergus, F. Durand, and W.~T. Freeman,
\newblock ``Image and depth from a conventional camera with a coded aperture,''
\newblock {\em ACM transactions on graphics (TOG)}, vol. 26, no. 3, pp. 70, 2007,
\newblock \href{https://dl.acm.org/doi/10.1145/1276377.1276464}{ACM URL}.

\bibitem{Loewen_Book}
E.~G. Loewen and E. Popov,
\newblock {\em Diffraction Gratings and Applications},
\newblock CRC Press, 1st edition, 1997,
\newblock \href{https://www.taylorfrancis.com/books/mono/10.1201/9781315214849/diffraction-gratings-applications-erwin-loewen-evgeny-popov}{Taylor-Francis URL}.

\bibitem{Asif_2016_FlatCam}
M.~S. Asif, A. Ayremlou, A. Sankaranarayanan, A. Veeraraghavan, and R.~G. Baraniuk,
\newblock ``Flat{C}am: Thin, lensless cameras using coded aperture and computation,''
\newblock {\em IEEE Transactions on Computational Imaging}, vol. 3, no. 3, pp. 384--397, 2016,
\newblock \href{https://ieeexplore.ieee.org/document/7517296}{IEEE Xplore}.

\bibitem{Faridian_2010_Nano}
A. Faridian, D. Hopp, G. Pedrini, U. Eigenthaler, M. Hirscher, and W. Osten,
\newblock ``Nanoscale imaging using deep ultraviolet digital holographic microscopy,''
\newblock {\em Optics Express}, vol. 18, no. 13, pp. 14159--14164, Jun 2010,
\newblock \href{https://opg.optica.org/oe/fulltext.cfm?uri=oe-18-13-14159&id=202456}{Optica URL}.

\bibitem{xie_2024_a}
M. Xie, H. Guo, B.~Y. Feng, L. Jin, A. Veeraraghavan, and C.~A. Metzler,
\newblock ``{WaveMo}: Learning wavefront modulations to see through scattering,''
\newblock in {\em Proceedings of the IEEE/CVF Conference on Computer Vision and Pattern Recognition}, 2024, pp. 25276--25285,
\newblock \href{https://ieeexplore.ieee.org/document/10657211}{IEEE Xplore}.

\bibitem{Kamilov_2023_tutorial}
U.~S. Kamilov, C.~A. Bouman, G.~T. Buzzard, and B. Wohlberg,
\newblock ``Plug-and-play methods for integrating physical and learned models in computational imaging: Theory, algorithms, and applications,''
\newblock {\em IEEE Signal Processing Magazine}, vol. 40, no. 1, pp. 85--97, 2023,
\newblock \href{https://ieeexplore.ieee.org/document/10004791}{IEEE Xplore}.

\bibitem{Hand_2018_GAN}
P. Hand, O. Leong, and V. Voroninski,
\newblock ``Phase retrieval under a generative prior,''
\newblock in {\em Advances in Neural Information Processing Systems}, 2018, pp. 9154--9164,
\newblock \href{https://dl.acm.org/doi/10.5555/3327546.3327588}{ACM URL}.

\bibitem{Bohra_2022_Bayesian}
P. Bohra, T.-a. Pham, J. Dong, and M. Unser,
\newblock ``Bayesian inversion for nonlinear imaging models using deep generative priors,''
\newblock {\em IEEE Transactions on Computational Imaging}, vol. 8, pp. 1237--1249, 2022,
\newblock \href{https://ieeexplore.ieee.org/document/10015047}{IEEE Xplore}.

\bibitem{Metzler_2018_prDeep}
C.~A. Metzler, P. Schniter, and A. Veeraraghavan,
\newblock ``{prDeep}: Robust phase retrieval with a flexible deep network,''
\newblock in {\em International Conference on Machine Learning}, 2018, pp. 3501--3510,
\newblock \href{https://proceedings.mlr.press/v80/metzler18a/metzler18a.pdf}{PMLR URL}.

\bibitem{Kappeler_2017_PtychNet}
A. Kappeler, S. Ghosh, J. Holloway, O. Cossairt, and A. Katsaggelos,
\newblock ``Ptychnet: {CNN} based {Fourier} ptychography,''
\newblock in {\em IEEE International Conference on Image Processing (ICIP)}, 2017, p. 1712–1716,
\newblock \href{https://ieeexplore.ieee.org/document/8296574}{IEEE Xplore}.

\bibitem{EdmundLam_2024_Review}
K. Wang, L. Song, C. Wang, Z. Ren, G. Zhao, J. Dou, J. Di, G. Barbastathis, R. Zhou, J. Zhao, and E.~Y. Lam,
\newblock ``On the use of deep learning for phase recovery,''
\newblock {\em Light: Science and Applications}, vol. 13, no. 4, pp. 1--46, 2024,
\newblock \href{https://www.nature.com/articles/s41377-023-01340-x}{Nature URL}.

\bibitem{Noll_1976}
R.~J. Noll,
\newblock ``Zernike polynomials and atmospheric turbulence,''
\newblock {\em Journal of the Optical Society of America}, vol. 66, no. 3, pp. 207--211, 1976,
\newblock \href{https://opg.optica.org/josa/abstract.cfm?uri=josa-66-3-207}{Optica URL}.

\bibitem{Chan_TurbulenceBook}
S.~H. Chan and N. Chimitt,
\newblock ``Computational imaging through atmospheric turbulence,''
\newblock {\em Foundations and Trends{\textregistered} in Computer Graphics and Vision}, vol. 15, no. 4, pp. 253--508, 2023,
\newblock \href{https://www.nowpublishers.com/article/Details/CGV-103}{NOW Publishers URL}.

\bibitem{Fienup_1982_Support}
J.~R. Fienup, T.~R. Crimmins, and W. Holsztynski,
\newblock ``Reconstruction of the support of an object from the support of its autocorrelation,''
\newblock {\em Journal of the Optical Society of America}, vol. 72, no. 5, pp. 610--624, May 1982,
\newblock \href{https://doi.org/10.1364/JOSA.72.000610}{Optica URL}.

\bibitem{Fienup_1989_Wavefront}
J.~N. Cederquist, J.~R. Fienup, C.~C. Wackerman, S.~R. Robinson, and D. Kryskowski,
\newblock ``Wave-front phase estimation from {Fourier} intensity measurements,''
\newblock {\em Journal of the Optical Society of America A}, vol. 6, no. 7, pp. 1020--1026, 1989,
\newblock \href{https://doi.org/10.1364/JOSAA.6.001020}{Optica URL}.

\bibitem{Fienup_1986_Boundary}
J.~R. Fienup,
\newblock ``Phase retrieval using boundary conditions,''
\newblock {\em Journal of the Optical Society of America A}, vol. 3, no. 2, pp. 284--288, Feb 1986,
\newblock \href{https://doi.org/10.1364/JOSA.73.001427}{Optica URL}.

\bibitem{Martinache_2013_Asymmetric}
F. Martinache,
\newblock ``The asymmetric pupil {Fourier} wavefront sensor,''
\newblock {\em Publications of the Astronomical Society of the Pacific}, vol. 125, no. 926, pp. 422, 2013,
\newblock \href{https://iopscience.iop.org/article/10.1086/670670}{IOPScience URL}.

\bibitem{Paxman_2019_Ambiguity}
R.~G. Paxman, D.~A. Carrara, J.~J. Miller, K.~W. Gleichman, M.~A. Rucci, and B.~K. Karch,
\newblock ``Estimation ambiguities encountered when imaging through turbulence,''
\newblock in {\em Unconventional and Indirect Imaging, Image Reconstruction, and Wavefront Sensing 2019}. 2019, vol. 11135, p. 111350G, SPIE,
\newblock \href{https://www.spiedigitallibrary.org/conference-proceedings-of-spie/11135/111350G/Estimation-ambiguities-encountered-when-imaging-through-turbulence/10.1117/12.2533979.short}{SPIE URL}.

\bibitem{JuSun_2020_BreakSymmetry}
K. Tayal, C.-H. Lai, R. Manekar, Z. Zhuang, V. Kumar, and J. Sun,
\newblock ``Unlocking inverse problems using deep learning: Breaking symmetries in phase retrieval,''
\newblock in {\em NeurIPS 2020 Workshop on Deep Learning and Inverse Problems}, 2020,
\newblock \href{https://openreview.net/forum?id=oyhGIytV1S}{OpenReview URL}.

\bibitem{JuSun_2021_BreakSymmetry}
R. Manekar, K. Tayal, Z. Zhuang, C.-H. Lai, V. Kumar, and J. Sun,
\newblock ``Breaking symmetries in data-driven phase retrieval,''
\newblock in {\em OSA Imaging and Applied Optics Congress 2021 (3D, COSI, DH, ISA, pcAOP)}, 2021, p. CTh4A.4,
\newblock \href{https://opg.optica.org/abstract.cfm?uri=COSI-2021-CTh4A.4}{Optica URL}.

\bibitem{itoh_1982_a}
K. Itoh,
\newblock ``Analysis of the phase unwrapping algorithm,''
\newblock {\em Applied Optics}, vol. 21, no. 14, pp. 2470--2470, 1982,
\newblock \href{https://opg.optica.org/ao/abstract.cfm?uri=ao-21-14-2470}{Optica URL}.

\bibitem{mao2021accelerating}
Z. Mao, N. Chimitt, and S.~H. Chan,
\newblock ``Accelerating atmospheric turbulence simulation via learned phase-to-space transform,''
\newblock in {\em 2021 IEEE/CVF International Conference on Computer Vision (ICCV)}, 2021, pp. 14739--14748,
\newblock \href{https://ieeexplore.ieee.org/document/9711075}{IEEE Xplore}.

\bibitem{Chimitt_2020_OpEng}
N. Chimitt and S.~H. Chan,
\newblock ``Simulating anisoplanatic turbulence by sampling intermodal and spatially correlated {Zernike} coefficients,''
\newblock {\em Optical Engineering}, vol. 59, no. 8, pp. 083101, 2020,
\newblock \href{https://www.spiedigitallibrary.org/journals/optical-engineering/volume-59/issue-8/083101/Simulating-anisoplanatic-turbulence-by-sampling-intermodal-and-spatially-correlated-Zernike/10.1117/1.OE.59.8.083101.short}{SPIE URL}.

\bibitem{Chimitt_2022_RealTime}
N. Chimitt, X. Zhang, Z. Mao, and S.~H. Chan,
\newblock ``Real-time dense field phase-to-space simulation of imaging through atmospheric turbulence,''
\newblock {\em IEEE Transactions on Computational Imaging}, 2022,
\newblock \href{https://ieeexplore.ieee.org/document/9969142}{IEEE Xplore}.

\bibitem{Fienup_1993_Complicated}
J.~R. Fienup,
\newblock ``Phase-retrieval algorithms for a complicated optical system,''
\newblock {\em Applied Optics}, vol. 32, no. 10, pp. 1737--1746, 1993,
\newblock \href{https://opg.optica.org/ao/fulltext.cfm?uri=ao-32-10-1737&id=40879}{Optica URL}.

\bibitem{mildenhall_2021_a}
B. Mildenhall, P.~P. Srinivasan, M. Tancik, J.~T. Barron, R. Ramamoorthi, and R. Ng,
\newblock ``Nerf: Representing scenes as neural radiance fields for view synthesis,''
\newblock {\em Communications of the ACM}, vol. 65, no. 1, pp. 99--106, 2021,
\newblock \href{https://dl.acm.org/doi/abs/10.1145/3503250}{ACM URL}.

\end{thebibliography}

\end{document}